\newcommand{\R}{\mathbb{R}}
\newcommand{\N}{\mathbb{N}}
\newcommand{\Pbb}{\mathbb{P}}
\newcommand{\Ebb}{\mathbb{E}}
\newcommand{\1}{\mathbbm{1}}
\newcommand{\KL}{\mathrm{D_{KL}}}
\newcommand{\logit}{\mathrm{logit}}
\newcommand{\sigmoid}{\sigma}
\newcommand{\cH}{\mathcal{H}}
\newcommand{\cV}{\mathcal{V}}
\newcommand{\cP}{\mathcal{P}}
\newcommand{\cY}{\mathcal{Y}}
\newcommand{\cB}{\mathcal{B}}
\newcommand{\cN}{\mathcal{N}}
\newcommand{\cTheta}{\boldsymbol{\Theta}}
\newcommand{\bomega}{\boldsymbol{\omega}}
\newcommand{\btheta}{\boldsymbol{\theta}}
\newcommand{\bgamma}{\boldsymbol{\gamma}}
\DeclareMathOperator*{\argmax}{arg\,max}
\theoremstyle{plain}
\newtheorem{theorem}{Theorem}[section]
\newtheorem{proposition}{Proposition}[section]
\newtheorem{lemma}{Lemma}[section]
\newtheorem{corollary}{Corollary}[section]
\newtheorem{definition}{Definition}[section]
\newtheorem{assumption}{Assumption}[section]
\newtheorem{example}{Example}[section]
\newtheorem{remark}{Remark}[section]
\title{Prediction Markets as Bayesian Inverse Problems:\\
Uncertainty Quantification, Identifiability, and Information Gain\\
from Price--Volume Histories under Latent Types}
\author[1,2]{Juan P. Madrigal-Cianci \footnote{Corresponding author \Letter \texttt{Juan.madrigalcianci@alumni.epfl.ch}}}
\author[2]{Camilo Monsalve Maya}
\author[1]{Lachlan Breakey}
\affil[1]{Kosmos Ventures, Australia}
\affil[2]{MC SAS, Colombia}
\date{\today}
\begin{document}
\maketitle

\begin{abstract}
Prediction markets are often described as mechanisms that ``aggregate information'' into prices, yet the mapping from dispersed private information to observed market histories is typically noisy, endogenous, and shaped by heterogeneous and strategic participation. This paper formulates prediction markets as Bayesian inverse problems in which the unknown event outcome \(Y\in\{0,1\}\) is inferred from an observed history of market-implied probabilities and traded volumes. We introduce a mechanism-agnostic observation model in log-odds space in which price increments conditional on volume arise from a latent mixture of trader types. The resulting likelihood class encompasses informed and uninformed trading, heavy-tailed microstructure noise, and adversarial or manipulative flow, while requiring only price and volume as observables.

Within this framework we define posterior uncertainty quantification for \(Y\), provide identifiability and well-posedness criteria in terms of Kullback--Leibler separation between outcome-conditional increment laws, and derive posterior concentration statements and finite-sample error bounds under general regularity assumptions. We further study stability of posterior odds to perturbations of the observed price--volume path and define realized and expected information gain via the posterior-vs-prior KL divergence and mutual information. The inverse-problem formulation yields explicit diagnostics for regimes in which market histories are informative and stable versus regimes in which inference is ill-posed due to type-composition confounding or outcome--nuisance symmetries.

Extensive experiments on synthetic data validate our theoretical predictions regarding posterior concentration rates and identifiability thresholds.
\end{abstract}

\section{Introduction}
\label{sec:intro}

Prediction markets are widely deployed as forecasting tools, as information elicitation mechanisms, and as components of decentralized oracle architectures \citep{arrow2008promise,wolfers2004prediction}. In many theoretical treatments, the market-implied probability is interpreted as a posterior belief and the market itself is treated as an efficient aggregator of dispersed information \citep{manski2006interpreting,gjerstad2005risk}. In practice, however, market prices and volumes are generated by heterogeneous participants with varying information quality, risk preferences, budgets, latency, and strategic incentives. The resulting market history is a noisy and endogenous observation of the underlying event outcome. From the standpoint of inference, prediction markets can therefore be viewed as \emph{stochastic sensing systems}, and the central question becomes: what can be reliably inferred about the outcome from the observed market history, and with what quantified uncertainty?

This paper develops a Bayesian inverse-problem formulation of that question in a setting where the analyst observes only a price path (expressed as a market-implied probability) and a volume process. The unknown event outcome is binary, \(Y\in\{0,1\}\), and the observable history over a finite horizon consists of pairs \((p_t,v_t)\) where \(p_t\in(0,1)\) is a market-implied probability and \(v_t\ge 0\) is the traded volume over an interval or at an event time. The key modeling challenge is that the observation operator mapping \(Y\) to \((p_t,v_t)\) is mediated by unobserved agent-level behavior. We treat this mediation explicitly by introducing latent trader \emph{types} whose aggregate effect on price increments is captured by a mixture-of-experts model in log-odds space \citep{jacobs1991adaptive,jordan1994hierarchical}. The resulting likelihood class is deliberately mechanism-agnostic: it is not tied to a particular market microstructure (automated market maker, limit-order book, call auction) and instead describes the conditional distribution of log-odds increments given volume and outcome. This choice permits inference from the minimal information set that is often available in practice, while still allowing principled uncertainty quantification, identifiability analysis, and stability diagnostics.

\subsection{Contributions}

The inverse-problem perspective yields three primary benefits:

\paragraph{Principled uncertainty quantification.} It separates the inferential target \(Y\) from the nuisance structure that shapes the observations and provides a coherent Bayesian posterior \(\Pbb(Y=1\mid H_T)\) with quantified sensitivity to modeling choices. We derive posterior concentration and finite-sample bounds on posterior error probability that decay exponentially in a KL-projection gap characterizing the distinguishability of outcomes after optimizing over nuisance structure (Theorem~\ref{thm:posterior-consistency}, Proposition~\ref{prop:finite-sample}).

\paragraph{Explicit identifiability and well-posedness criteria.} Inverse problems are ill-posed when distinct latent states induce nearly indistinguishable observation laws or when small perturbations of the observations lead to large changes in the inferred state. By translating these notions into the language of price--volume histories, we obtain conditions under which the outcome is identifiable and posterior odds are stable (Section~\ref{sec:identifiability}), as well as conditions under which inference is fundamentally ambiguous due to type-composition confounding or outcome--nuisance symmetries.

\paragraph{Information-theoretic metrics.} We define realized information gain as the KL divergence between the posterior and prior on \(Y\), and expected information gain as a mutual-information functional, thereby quantifying the extent to which the market history functions as an informative measurement of the outcome (Section~\ref{sec:infogain}). These metrics enable principled comparison of market designs and participation regimes.

\subsection{Paper Organization}

To reduce fragmentation without changing scope, we organize the technical core into two main blocks. Section~\ref{sec:related} reviews related work. Section~\ref{sec:model} introduces the observation space, the log-odds representation, the latent-type likelihood class, and the Bayesian posterior. Section~\ref{sec:theory} develops the inverse-problem analysis: identifiability via KL projections, posterior concentration and finite-sample bounds, stability under perturbations, and information-gain functionals. Section~\ref{sec:comp-exp} presents computational methods and empirical validation on synthetic and real market data. Section~\ref{sec:discussion} concludes with limitations and extensions.

\section{Background and Related Work}
\label{sec:related}

Our work connects several distinct literatures: prediction market theory and information aggregation, market microstructure models of informed trading, Bayesian inverse problems and posterior consistency, and mixture-of-experts models in machine learning.

\subsection{Prediction Markets and Information Aggregation}

The theoretical foundations of prediction markets rest on the idea that market prices aggregate dispersed private information \citep{hayek1945use}. \citet{hanson2003combinatorial,hanson2007logarithmic} introduced the logarithmic market scoring rule (LMSR), which provides bounded loss market making and connects prediction markets to proper scoring rules \citep{gneiting2007strictly,savage1971elicitation}. \citet{chen2007utility} analyzed the relationship between market scoring rules and cost functions, establishing equivalences that underpin modern automated market makers.

The question of whether prediction markets actually achieve efficient information aggregation has received substantial attention. \citet{wolfers2004prediction} provided early evidence that prediction market prices track event probabilities well on average, while \citet{manski2006interpreting} cautioned that market prices confound beliefs with risk preferences, questioning the interpretation of prices as probabilities. \citet{ottaviani2015price} showed that even with risk-neutral traders, strategic behavior can prevent full information revelation. \citet{ostrovsky2012information} identified conditions under which repeated trading leads to information aggregation, while \citet{iyer2014information} studied aggregation in combinatorial markets.

Our framework complements this literature by taking an \emph{inferential} rather than \emph{equilibrium} perspective: we ask what can be learned about the outcome from the market history, treating participant behavior as latent structure to be marginalized rather than strategically modeled.

\subsection{Market Microstructure and Informed Trading}

The market microstructure literature provides structural models of how private information affects prices. The seminal \citet{kyle1985continuous} model describes a single informed trader, market makers, and noise traders, showing that price impact is linear in order flow and that the informed trader's private information is gradually incorporated into prices. \citet{glosten1985bid} introduced a sequential trade model where bid-ask spreads arise from adverse selection against informed traders.

The probability of informed trading (PIN) model of \citet{easley1996liquidity,easley2002factors} decomposes order flow into informed and uninformed components, enabling estimation of information asymmetry from trade data. Extensions include time-varying arrival rates \citep{easley2008time} and high-frequency adaptations \citep{easley2012flow}. \citet{back2000imperfect} and \citet{back2004information} extended Kyle's framework to multiple informed traders and different information structures.

Our latent-type mixture model can be viewed as a reduced-form representation of these structural models. The key distinction is that we do not require order-level data or specific assumptions about strategic behavior; instead, we model the aggregate effect of different trader populations on price increments conditional on volume.

\subsection{Bayesian Inverse Problems}

The Bayesian approach to inverse problems treats unknown quantities as random variables and uses observed data to update prior beliefs \citep{stuart2010inverse,kaipio2006statistical}. Well-posedness in this framework requires that posteriors exist, are unique, and depend continuously on data \citep{dashti2017bayesian,latz2020wellposedness,madrigal2022thesis}. Posterior consistency (i.e., convergence of the posterior to the true parameter as data accumulate-) has been extensively studied \citep{ghosal2000convergence,ghosal2017fundamentals,van2008rates}.

For parametric models, the classical Bernstein--von Mises theorem establishes that posteriors concentrate around the maximum likelihood estimator at rate \(n^{-1/2}\) and become asymptotically normal \citep{van2000asymptotic}. For model selection and hypothesis testing, Bayes factors provide a coherent framework for comparing models with different nuisance structures \citep{kass1995bayes}. \citet{walker2004rates} developed general posterior convergence rate theory based on testing conditions, which we adapt to our setting.

Our contribution is to instantiate this general framework for the specific structure of prediction market inference, where the ``inverse problem'' is recovering a binary outcome from price--volume histories mediated by latent trader types.

\subsection{Mixture Models and Mixture of Experts}

Mixture models provide a flexible framework for modeling heterogeneous populations \citep{mclachlan2000finite,fruhwirth2006finite}. The mixture-of-experts architecture \citep{jacobs1991adaptive,jordan1994hierarchical} extends this by allowing mixture weights to depend on input features through a gating network. This architecture has seen renewed interest in large-scale machine learning \citep{shazeer2017outrageously,fedus2022switch}.

Identifiability of finite mixtures is a classical concern \citep{teicher1963identifiability,yakowitz1968identifiability}. Location-scale mixtures are generally identifiable under mild conditions \citep{holzmann2006identifiability}, but identifiability can fail when components overlap substantially or when the number of components is misspecified. In our setting, identifiability of the \emph{outcome} \(Y\) is primary, with mixture components serving as nuisance structure; this shifts the identifiability analysis from component recovery to outcome distinguishability.

\subsection{Learning from Strategic Data}

A growing literature in machine learning addresses learning from data generated by strategic agents \citep{hardt2016strategic,perdomo2020performative}. In online learning, adversarial bandits and experts settings study learning when data may be adversarially generated \citep{cesa2006prediction,bubeck2012regret}. Mechanism design approaches seek to incentivize truthful reporting \citep{chen2005information,lambert2008self}.

Our framework relates to this literature through the latent-type structure: adversarial or manipulative traders correspond to types whose increment distributions do not depend on the true outcome (or actively oppose it). The KL-projection gap condition for identifiability can be interpreted as requiring that informative types have sufficient weight to overcome noise and manipulation.

\section{Model and Bayesian Formulation}
\label{sec:model}

\subsection{Outcome and Observed History}

Fix a time horizon \(T\in\N\). The unknown event outcome is a Bernoulli random variable
\[
Y \in \cY := \{0,1\},
\qquad
\Pbb(Y=1)=\pi_0\in(0,1).
\]
The analyst observes a price--volume history
\[
H_T := \bigl( (P_t,V_t) \bigr)_{t=0}^T,
\]
where \(P_t\in(0,1)\) is a market-implied probability at time \(t\) and \(V_t\in[0,\infty)\) is the traded volume in the interval \((t-1,t]\) for \(t\ge 1\). We include \(P_0\in(0,1)\) as an initial market state and do not define \(V_0\). The associated measurable spaces are
\[
\cP := (0,1)^{T+1},\qquad \cV := [0,\infty)^T,\qquad \cH := \cP\times \cV,
\]
each endowed with the product Borel \(\sigma\)-algebra. We write an element of \(\cH\) as \(h=(p_{0:T},v_{1:T})\).

\begin{remark}[Volume as design]
The present paper takes the volume sequence \(v_{1:T}\) as part of the observation, but does not require an explicit generative model for it. In many empirical settings, modeling volume is valuable, but treating \(v_{1:T}\) as a realized design sequence yields an analytically clean starting point and allows one to interpret volume as an endogenous ``measurement intensity'' chosen by participants. When needed, a generative model for \(V_{1:T}\) can be added without changing the inferential target; see Section~\ref{sec:discussion} for extensions.
\end{remark}

\subsection{Log-Odds Coordinates}

Define the logit map \(\logit:(0,1)\to\R\) by
\[
\logit(p) := \log\frac{p}{1-p},
\]
and its inverse, the logistic map \(\sigmoid:\R\to(0,1)\), by
\[
\sigmoid(x) := \frac{1}{1+e^{-x}}.
\]
We define the log-odds process \(X_t:=\logit(P_t)\) and its increments \(\Delta X_t := X_t - X_{t-1}\) for \(t\ge 1\). Given an observed history \(h=(p_{0:T},v_{1:T})\), the corresponding log-odds history is \(x_{0:T}=\logit(p_{0:T})\) and increments \(\Delta x_{1:T}\).

Working in log-odds space is natural for several reasons:

\begin{enumerate}
    \item \textbf{Additivity of evidence:} Under conditional independence, log-likelihood ratios (and hence log-odds updates) are additive, making the log-odds representation natural for sequential Bayesian updating.

    \item \textbf{Unbounded support:} The log-odds transformation maps \((0,1)\) to \(\R\), avoiding boundary behavior near \(0\) and \(1\) and permitting Gaussian and other unbounded noise models.

    \item \textbf{Microstructure compatibility:} Many market microstructure models describe price changes as approximately additive in log-odds (or log-prices) under small moves \citep{kyle1985continuous,back2000imperfect}.
\end{enumerate}

\subsection{Latent Types as Nuisance Structure}
\label{sec:model:types}

We introduce a finite set of latent types \(\{1,\dots,K\}\), where \(K\in\N\) is fixed. Types are not required to correspond to individual agents; rather, they represent latent behavioral regimes whose aggregate effect on price increments can be described probabilistically. This is a standard abstraction in mixture modeling \citep{mclachlan2000finite} and is particularly appropriate when the analyst observes only aggregated market outcomes rather than identity-level order flow.

The nuisance parameter \(\cTheta\) collects the quantities that govern type prevalence, type activation as a function of volume, and type-specific increment laws. We define:

\begin{enumerate}
\item A vector of base mixture weights \(\bomega=(\omega_1,\dots,\omega_K)\) in the simplex
\[
\Delta^{K-1} := \Bigl\{ \bomega\in[0,1]^K : \sum_{k=1}^K \omega_k = 1 \Bigr\}.
\]

\item Gating parameters \(\bgamma\in\Gamma\), where \(\Gamma\subseteq\R^{d_\gamma}\) is a compact parameter space, used to define volume-dependent participation weights.

\item Type-specific parameters \(\btheta_k\in\Theta_k\), where each \(\Theta_k\subseteq\R^{d_k}\) is a compact parameter space.
\end{enumerate}

We write \(\cTheta := (\bomega,\bgamma,\btheta_{1:K})\) and \(\Theta := \Delta^{K-1}\times \Gamma \times \Theta_1\times\cdots\times\Theta_K\) for the global parameter space.

\subsection{Volume-Dependent Gating}

Let \(\rho_k:[0,\infty)\times \Delta^{K-1}\times \Gamma \to [0,1]\) be measurable functions satisfying, for all \(v\ge 0\),
\[
\sum_{k=1}^K \rho_k(v;\bomega,\bgamma)=1.
\]
We interpret \(\rho_k(v;\bomega,\bgamma)\) as the probability that, at volume level \(v\), the effective price increment is generated by type \(k\). This formulation allows the active composition of market participants to change with market activity while maintaining a parsimonious parameterization through \((\bomega,\bgamma)\).

A canonical choice is a softmax gate:
\begin{equation}
\rho_k(v;\bomega,\bgamma) = \frac{\omega_k \exp(a_k(v;\bgamma))}{\sum_{j=1}^K \omega_j \exp(a_j(v;\bgamma))},
\label{eq:softmax-gate}
\end{equation}
where \(a_k:[0,\infty)\times\Gamma\to\R\) is a gating \emph{logit} function (renamed to avoid collision with the noise density \(g_k\) below). Natural choices include:
\begin{itemize}
    \item \textbf{Volume-linear:} \(a_k(v;\bgamma) = \gamma_{k,0} + \gamma_{k,1} v\)
    \item \textbf{Log-volume:} \(a_k(v;\bgamma) = \gamma_{k,0} + \gamma_{k,1} \log(1+v)\)
    \item \textbf{Threshold:} \(a_k(v;\bgamma) = \gamma_{k,0} + \gamma_{k,1} \1\{v > \tau_k\}\)
\end{itemize}

These parameterizations capture the empirical regularity that informed traders may be more active during high-volume periods \citep{easley1996liquidity}, while noise traders may dominate low-volume periods.

\subsection{Type-Specific Increment Laws}

For each type \(k\in\{1,\dots,K\}\) and outcome \(y\in\{0,1\}\), we define a conditional density \(f_{k,y}(\cdot\mid v,\btheta_k)\) on \(\R\) with respect to Lebesgue measure. We assume a location--scale representation
\begin{equation}
\Delta X = m_{k,y}(v;\btheta_k) + s_k(v;\btheta_k)\,\varepsilon,
\qquad
\varepsilon\sim G_k,
\label{eq:locscale}
\end{equation}
where \(m_{k,y}:[0,\infty)\times \Theta_k\to \R\) is a measurable location function, \(s_k:[0,\infty)\times \Theta_k\to(0,\infty)\) is a measurable scale function, and \(G_k\) is a base noise distribution on \(\R\) with density \(g_k\). Under \eqref{eq:locscale},
\[
f_{k,y}(\delta x\mid v,\btheta_k)
=
\frac{1}{s_k(v;\btheta_k)}\,
g_k\!\left(
\frac{\delta x - m_{k,y}(v;\btheta_k)}{s_k(v;\btheta_k)}
\right).
\]

The dependence on \(y\) is absorbed into \(m_{k,y}\), which is the reduced-form object encoding how truth influences directional drift in log-odds increments conditional on volume.

\begin{definition}[Trader type taxonomy]
\label{def:type-taxonomy}
Given type parameters \(\btheta_k\), we classify type \(k\) as:
\begin{enumerate}
    \item \textit{Informed} if \(m_{k,1}(v;\btheta_k) > m_{k,0}(v;\btheta_k)\) on a relevant volume range (prices drift toward truth).
    \item \textit{Uninformed/Noise} if \(m_{k,1}(v;\btheta_k) = m_{k,0}(v;\btheta_k)\) (increments are outcome-independent).
    \item \textit{Adversarial/Manipulative} if \(m_{k,1}(v;\btheta_k) < m_{k,0}(v;\btheta_k)\) (prices drift away from truth).
\end{enumerate}
\end{definition}

\subsection{Mixture Likelihood and Conditional Independence}

Given the nuisance parameter \(\cTheta\), the outcome \(Y=y\), and a realized volume \(V_t=v\), we model the increment \(\Delta X_t\) as having density
\begin{equation}
f_y(\delta x \mid v, \cTheta)
:=
\sum_{k=1}^K \rho_k(v;\bomega,\bgamma)\, f_{k,y}(\delta x\mid v,\btheta_k).
\label{eq:mixture-density}
\end{equation}

Let \(\Delta X_{1:T}=(\Delta X_1,\dots,\Delta X_T)\) and \(V_{1:T}=(V_1,\dots,V_T)\). The central simplifying assumption is conditional independence.

\begin{assumption}[Conditional independence of increments]
\label{ass:conditional-independence}
Conditional on \(Y\), \(\cTheta\), and \(V_{1:T}\), the increments \(\Delta X_t\) are independent with the time-inhomogeneous product density:
\begin{equation}
p(\Delta x_{1:T}\mid v_{1:T},Y=y,\cTheta)
=
\prod_{t=1}^T f_y(\Delta x_t \mid v_t,\cTheta).
\label{eq:conditional-likelihood}
\end{equation}
\end{assumption}

This assumption can be relaxed to allow Markovian dependence or self-exciting volatility; see Section~\ref{sec:discussion} for extensions.

\subsection{A Concrete Instantiation: Gaussian Latent-Type Model}
\label{sec:concrete-model}

To ground the abstract framework, we present a fully specified model that satisfies our assumptions and admits efficient computation. We explicitly impose an \emph{orientation constraint} on drift magnitudes to eliminate the otherwise unavoidable outcome--nuisance sign symmetry.

\begin{example}[Gaussian latent-type model]
\label{ex:gaussian-model}
Consider \(K=3\) types representing informed traders, noise traders, and manipulators. Let \(\mu_1,\mu_3\ge 0\) be drift magnitudes (orientation constraint) and let \(\lambda_1,\kappa_1,\sigma_1,\sigma_2,\sigma_3>0\), \(\tau_3\ge 0\).

\textbf{Type 1 (Informed):}
\begin{align*}
m_{1,y}(v;\btheta_1) &= \mu_1 (2y-1) \cdot (1 - e^{-\lambda_1 v}), \\
s_1(v;\btheta_1) &= \sigma_1 / \sqrt{1 + \kappa_1 v}, \\
G_1 &= \cN(0,1).
\end{align*}
The drift \(\mu_1(2y-1)\) pushes prices toward the truth (\(y=1\) implies positive drift), with magnitude increasing in volume at rate \(\lambda_1\). Scale decreases with volume, reflecting improved price efficiency.

\textbf{Type 2 (Noise):}
\begin{align*}
m_{2,y}(v;\btheta_2) &= 0 \quad \text{(outcome-independent)}, \\
s_2(v;\btheta_2) &= \sigma_2, \\
G_2 &= \cN(0,1).
\end{align*}

\textbf{Type 3 (Manipulator):}
\begin{align*}
m_{3,y}(v;\btheta_3) &= -\mu_3 (2y-1) \cdot \1\{v > \tau_3\}, \\
s_3(v;\btheta_3) &= \sigma_3, \\
G_3 &= t_\nu \quad \text{(Student-}t \text{ with } \nu \text{ degrees of freedom)}.
\end{align*}

\textbf{Gating:} Use the general gating family \(\rho_k\). A convenient instantiation is the softmax gate \eqref{eq:softmax-gate} with log-volume logits
\[
a_k(v;\bgamma) = \gamma_{k,0} + \gamma_{k,1} \log(1+v),
\]
optionally combined with hard participation constraints such as \(\rho_3(v;\bomega,\bgamma)=0\) for \(v\le \tau_3\) when one wants a strict ``activation'' regime for manipulation (useful for theoretical separation arguments; empirically, this can be approximated by very negative logits below \(\tau_3\)).

The full parameter vector is:
\[
\btheta = (\bomega, \bgamma, \mu_1, \lambda_1, \sigma_1, \kappa_1, \sigma_2, \mu_3, \tau_3, \sigma_3, \nu).
\]
\end{example}

\begin{remark}[Orientation and outcome--nuisance symmetries]
Without sign restrictions (e.g.\ allowing \(\mu_1<0\)), the mapping \((y,\mu_1)\mapsto (1-y,-\mu_1)\) can render outcome inference non-identifiable, since the conditional increment law may be reproduced under the wrong outcome by a sign flip in nuisance parameters. We therefore treat the orientation constraint \(\mu_1,\mu_3\ge 0\) as part of the model specification.
\end{remark}

\subsection{Prior and Posterior}

We place a prior \(\Pi\) on \(\cTheta\). A convenient factorization is
\[
\bomega \sim \mathrm{Dirichlet}(\alpha_1,\dots,\alpha_K),
\qquad
\bgamma \sim \Pi_\Gamma,
\qquad
\btheta_k \sim \Pi_k \ \text{ independently for } k=1,\dots,K.
\]

Given an observed history \(h=(p_{0:T},v_{1:T})\), let \(\Delta x_{1:T}\) be the implied log-odds increments. The joint posterior on \((Y,\cTheta)\) is
\[
\Pbb(Y=y, \cTheta\in d\theta \mid h)
\propto
\pi_0^y(1-\pi_0)^{1-y}\,
p(\Delta x_{1:T}\mid v_{1:T},Y=y,\theta)\,\Pi(d\theta).
\]

The marginal posterior on the outcome is
\begin{equation}
\Pbb(Y=1\mid h)
=
\frac{\pi_0\, m_1(h)}{\pi_0\, m_1(h) + (1-\pi_0)\, m_0(h)},
\label{eq:posterior}
\end{equation}
where the marginal likelihood under outcome \(y\) is
\begin{equation}
m_y(h):=\int_\Theta p(\Delta x_{1:T}\mid v_{1:T},Y=y,\theta)\,\Pi(d\theta).
\label{eq:marginal-likelihood}
\end{equation}

Define the Bayes factor
\[
\mathrm{BF}_T(h):=\frac{m_1(h)}{m_0(h)}.
\]
Then the posterior odds admit the closed form
\begin{equation}
\log\frac{\Pbb(Y=1\mid h)}{\Pbb(Y=0\mid h)}
=
\log\frac{\pi_0}{1-\pi_0} + \log \mathrm{BF}_T(h).
\label{eq:posterior-odds}
\end{equation}

Equation~\eqref{eq:posterior-odds} makes explicit that inference on \(Y\) reduces to understanding the marginal Bayes factor induced by the price--volume history after integrating out latent type structure.

\section{Inverse-Problem Analysis: Identifiability, Concentration, Stability, and Information Gain}
\label{sec:theory}

\subsection{Induced Data Laws and KL Projections}
\label{sec:identifiability}

Fix a deterministic volume design \(v_{1:T}\in\cV\). For each outcome \(y\in\{0,1\}\) and nuisance parameter \(\theta\in\Theta\), let \(P_{y,\theta}^{(T)}\) denote the probability measure on \((\R^T,\cB(\R^T))\) with density
\[
\delta x_{1:T} \mapsto \prod_{t=1}^T f_y(\delta x_t\mid v_t,\theta).
\]
We interpret \(P_{y,\theta}^{(T)}\) as the induced law of log-odds increments conditional on the realized volume sequence.

For \(\theta,\theta'\in\Theta\) and \(y,y'\in\{0,1\}\), define the finite-horizon KL divergence
\[
\KL\!\left(P_{y,\theta}^{(T)} \,\middle\|\, P_{y',\theta'}^{(T)}\right)
:=
\Ebb_{P_{y,\theta}^{(T)}}\!\left[
\log \frac{p_{y,\theta}^{(T)}(\Delta X_{1:T})}{p_{y',\theta'}^{(T)}(\Delta X_{1:T})}
\right],
\]
where \(p_{y,\theta}^{(T)}\) denotes the corresponding density. Under Assumption~\ref{ass:conditional-independence}, this divergence decomposes as
\begin{equation}
\KL\!\left(P_{y,\theta}^{(T)} \,\middle\|\, P_{y',\theta'}^{(T)}\right)
=
\sum_{t=1}^T \KL\!\left( f_y(\cdot\mid v_t,\theta) \,\middle\|\, f_{y'}(\cdot\mid v_t,\theta') \right).
\label{eq:kl-sum}
\end{equation}

Because the nuisance parameter is unknown, the relevant separation notion is a \emph{projection gap} between the outcome-indexed families after optimizing over nuisance structure.

\begin{definition}[KL projection gap]
\label{def:kl-gap}
Fix \((y^\star,\theta^\star)\) and \(v_{1:T}\). For \(y\in\{0,1\}\), define the (normalized) KL projection
\begin{equation}
K_T(y^\star,\theta^\star\to y)
:=
\inf_{\theta\in\Theta} \frac{1}{T} \KL\!\left(P_{y^\star,\theta^\star}^{(T)} \,\middle\|\, P_{y,\theta}^{(T)}\right).
\label{eq:kl-gap}
\end{equation}
The \emph{outcome-separation gap} is
\[
\delta_T(y^\star,\theta^\star) := K_T(y^\star,\theta^\star\to 1-y^\star).
\]
\end{definition}

\begin{definition}[Outcome identifiability (at the truth)]
\label{def:identifiability}
Fix \((y^\star,\theta^\star)\) and \(v_{1:T}\). The outcome is \emph{identifiable at horizon \(T\) at \((y^\star,\theta^\star)\)} if \(\delta_T(y^\star,\theta^\star)>0\). If additionally the reverse-direction projection \(K_T(1-y^\star,\theta\to y^\star)\) is bounded away from zero uniformly over \(\theta\in\Theta\), we say the model class exhibits \emph{two-sided} outcome separation.
\end{definition}

\begin{remark}[Uniform vs.\ truth-relative identifiability]
Demanding \(\delta_T(y^\star,\theta^\star)>0\) for \emph{every} \(\theta^\star\in\Theta\) is typically too strong in latent-type models, since \(\Theta\) often includes regimes in which all active types are outcome-independent (e.g.\ \(\rho_k\) concentrates on noise types). Our definition isolates the statistically meaningful question: whether the realized data-generating pair \((y^\star,\theta^\star)\) is distinguishable from the alternative outcome family.
\end{remark}

\subsection{Mechanisms of Non-Identifiability}
\label{sec:identifiability-failure}

The latent-type structure \eqref{eq:mixture-density} highlights key sources of outcome non-identifiability.

\begin{proposition}[Sufficient conditions for outcome non-identifiability]
\label{prop:identifiability-failure}
Fix \((y^\star,\theta^\star)\) and \(v_{1:T}\). The separation gap satisfies \(\delta_T(y^\star,\theta^\star)=0\) whenever any of the following hold:
\begin{enumerate}
    \item \textbf{Type-composition confounding (uninformative regime):} for each \(t\) and each type \(k\) with \(\rho_k(v_t;\theta^\star)>0\), one has \(f_{k,1}(\cdot\mid v_t,\btheta_k^\star)=f_{k,0}(\cdot\mid v_t,\btheta_k^\star)\). In particular, if \(f_{1}(\cdot\mid v_t,\theta^\star)=f_{0}(\cdot\mid v_t,\theta^\star)\) for all \(t\), then the two outcomes induce the same law and are indistinguishable.

    \item \textbf{Outcome--nuisance symmetry:} there exists a measurable map \(\Psi:\Theta\to\Theta\) such that for all \(t\),
    \[
    f_{1}(\cdot\mid v_t,\theta)=f_{0}(\cdot\mid v_t,\Psi(\theta)) \quad \text{as densities on }\R.
    \]
    Then \(\inf_{\theta}\KL(P_{1,\theta}^{(T)}\|P_{0,\theta'}^{(T)})=0\) and vice versa. This is the symmetry eliminated by the orientation constraint in Example~\ref{ex:gaussian-model}.

    \item \textbf{Adversarial mimicry:} there exists \(\theta\in\Theta\) such that \(P_{y^\star,\theta^\star}^{(T)}=P_{1-y^\star,\theta}^{(T)}\). This includes degenerate cases where adversarial components perfectly cancel informed drift on the observed volume range.
\end{enumerate}
\end{proposition}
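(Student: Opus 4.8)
The plan is to prove each of the three sufficient conditions by the same mechanism: exhibit an explicit nuisance parameter \(\theta\in\Theta\) at which the alternative-outcome law \(P_{1-y^\star,\theta}^{(T)}\) coincides \emph{exactly} with the data law \(P_{y^\star,\theta^\star}^{(T)}\). Since \(\KL\) is nonnegative and \(\KL(P\|P)=0\), such a \(\theta\) contributes a zero term to the infimum defining \(\delta_T(y^\star,\theta^\star)\) in \eqref{eq:kl-gap}, and because that infimum is over the nonempty set \(\Theta\) and of nonnegative quantities, it must equal \(0\). The only structural ingredient is that, under Assumption~\ref{ass:conditional-independence}, \(P_{y,\theta}^{(T)}\) is the product measure with density \(\prod_{t=1}^T f_y(\cdot\mid v_t,\theta)\); hence two such laws agree as soon as the per-step mixture densities \(f_y(\cdot\mid v_t,\cdot)\) from \eqref{eq:mixture-density} agree (Lebesgue-a.e.) for every \(t\), which is all I will need to check.

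For Condition 1, I would note that in \eqref{eq:mixture-density} the terms with \(\rho_k(v_t;\theta^\star)=0\) drop out, and on the remaining support the hypothesis \(f_{k,1}(\cdot\mid v_t,\btheta_k^\star)=f_{k,0}(\cdot\mid v_t,\btheta_k^\star)\) yields the intermediate equality \(f_1(\cdot\mid v_t,\theta^\star)=f_0(\cdot\mid v_t,\theta^\star)\) for every \(t\) — precisely the ``in particular'' clause in the statement. Taking \(\theta=\theta^\star\) then gives \(P_{y^\star,\theta^\star}^{(T)}=P_{1-y^\star,\theta^\star}^{(T)}\), so the infimum in \eqref{eq:kl-gap} is at most \(0\). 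Condition 3 is immediate: its hypothesis directly supplies a \(\theta\) with \(P_{y^\star,\theta^\star}^{(T)}=P_{1-y^\star,\theta}^{(T)}\). For Condition 2, if \(y^\star=1\) I would take \(\theta=\Psi(\theta^\star)\), so that \(f_0(\cdot\mid v_t,\Psi(\theta^\star))=f_1(\cdot\mid v_t,\theta^\star)\) for all \(t\) and hence \(P_{0,\Psi(\theta^\star)}^{(T)}=P_{1,\theta^\star}^{(T)}\); if \(y^\star=0\) I would invoke the same symmetry with the roles of the two outcomes exchanged (equivalently \(\Psi^{-1}\) when \(\Psi\) is a bijection, as for the sign-flip eliminated by the orientation constraint in Example~\ref{ex:gaussian-model}), obtaining a \(\theta\) with \(P_{1,\theta}^{(T)}=P_{0,\theta^\star}^{(T)}\). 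In all cases \(\inf_\theta\tfrac1T\KL(P_{y^\star,\theta^\star}^{(T)}\|P_{1-y^\star,\theta}^{(T)})\le 0\), hence \(\delta_T(y^\star,\theta^\star)=0\).

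This proposition is essentially a bookkeeping exercise and I do not anticipate a genuine obstacle; the points deserving explicit care are minor. First, the ``vice versa'' clause in Condition 2: the stated hypothesis only posits a map sending outcome-\(1\) laws to outcome-\(0\) laws, so to cover the case \(y^\star=0\) one should either assume \(\Psi\) is bijective or read the symmetry as furnishing both directions, and I would say so explicitly rather than leave it implicit. Second, I would flag that equality of the per-step densities Lebesgue-a.e. already forces equality of the product laws \(P_{y,\theta}^{(T)}\) (so pointwise equality is not required), and that \(\KL(P\|P)=0\) uses only the trivial absolute continuity \(P\ll P\). With these remarks in place the argument is complete.
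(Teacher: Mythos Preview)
Your proposal is correct and follows essentially the same approach as the paper: in each case you exhibit a nuisance parameter \(\theta\) (namely \(\theta^\star\) for (1), \(\Psi(\theta^\star)\) or its inverse for (2), and the hypothesized \(\theta\) for (3)) at which the alternative-outcome product law coincides with the data law, forcing the KL infimum to zero. Your explicit flag that the \(y^\star=0\) case of Condition~2 requires \(\Psi\) to be bijective (or an involution) matches the paper's own caveat in the appendix proof.
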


\begin{proof}
If \(P_{y^\star,\theta^\star}^{(T)}=P_{1-y^\star,\theta}^{(T)}\) for some \(\theta\), then \(\delta_T(y^\star,\theta^\star)=0\) by Definition~\ref{def:kl-gap}. Items (1)--(2) provide explicit sufficient conditions for such equality.
\end{proof}

\subsection{Sufficient Conditions for Separation in the Gaussian Model}
\label{sec:identifiability-sufficient}

The following result gives a robust (and checkable) route to a positive separation gap in Example~\ref{ex:gaussian-model}. The main idea is to identify a subset of times where (i) the outcome-dependent drift is activated, (ii) manipulation is inactive, and (iii) the remaining finite mixture is identifiable as a parametric family, so equality of laws under opposite outcomes is impossible.

\begin{theorem}[A design-based sufficient condition for outcome separation]
\label{thm:identifiability-sufficient}
Assume the Gaussian latent-type model of Example~\ref{ex:gaussian-model} with the orientation constraint \(\mu_1,\mu_3\ge 0\). Fix \((y^\star,\theta^\star)\) and suppose there exists an index set \(I_T\subset\{1,\dots,T\}\) and constants \(\underline{\rho},\underline{m},\underline{\sigma},\overline{\sigma}>0\) such that for all \(t\in I_T\):
\begin{enumerate}
    \item \textbf{Informed activation:} \(\rho_1(v_t;\theta^\star)\ge \underline{\rho}\) and \(|m_{1,1}(v_t;\btheta_1^\star)-m_{1,0}(v_t;\btheta_1^\star)|\ge \underline{m}\).
    \item \textbf{Manipulator inactivity:} \(m_{3,1}(v_t;\btheta_3)=m_{3,0}(v_t;\btheta_3)=0\) for all \(\btheta_3\in\Theta_3\) (e.g.\ \(v_t\le \tau_3\) and manipulator drift is thresholded as in Example~\ref{ex:gaussian-model}).
    \item \textbf{Non-degeneracy of scales:} for all \(\theta\in\Theta\), the type-specific scales satisfy \(\underline{\sigma}\le s_k(v_t;\btheta_k)\le \overline{\sigma}\) for \(k\in\{1,2\}\).
\end{enumerate}
Then there exists \(\kappa>0\) (depending on the constants above and on \(\Theta\)) such that
\[
\delta_T(y^\star,\theta^\star)\ \ge\ \frac{|I_T|}{T}\,\kappa.
\]
In particular, if \(\liminf_{T\to\infty} |I_T|/T>0\), then \(\liminf_{T\to\infty}\delta_T(y^\star,\theta^\star)>0\).
\end{theorem}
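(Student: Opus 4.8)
The plan is to reduce the claim to a \emph{per-time, per-competitor} lower bound and then exhibit a single bounded statistic that separates the two outcome-conditional laws uniformly. First I would invoke the additive decomposition \eqref{eq:kl-sum}: for any competing nuisance $\theta\in\Theta$,
\[
\KL\!\left(P_{y^\star,\theta^\star}^{(T)}\,\middle\|\,P_{1-y^\star,\theta}^{(T)}\right)=\sum_{t=1}^{T}\KL\!\left(f_{y^\star}(\cdot\mid v_t,\theta^\star)\,\middle\|\,f_{1-y^\star}(\cdot\mid v_t,\theta)\right),
\]
drop the nonnegative terms with $t\notin I_T$, and use $\inf_\theta\sum_t(\cdot)\ge\sum_t\inf_\theta(\cdot)$ to get
\[
\delta_T(y^\star,\theta^\star)\ \ge\ \frac1T\sum_{t\in I_T}\ \inf_{\theta\in\Theta}\KL\!\left(f_{y^\star}(\cdot\mid v_t,\theta^\star)\,\middle\|\,f_{1-y^\star}(\cdot\mid v_t,\theta)\right).
\]
It then suffices to produce a constant $\kappa>0$, independent of both $t\in I_T$ and $\theta\in\Theta$, with every summand at least $\kappa$; the displayed bound and the $\liminf$ consequence are then immediate. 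By the symmetry between the labels $0$ and $1$ I would assume $y^\star=1$ without loss of generality.

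For this uniform per-term bound I would fix once and for all a bounded, odd, strictly increasing function $\psi$ with $\|\psi\|_\infty\le1$ (e.g.\ $\psi=\tanh$) and compare $\Ebb[\psi(\Delta X)]$ under the two mixtures at a fixed $t\in I_T$. The key cancellations are structural: conditioning on the active type, the noise type ($m_{2,\cdot}\equiv0$, symmetric base) and the manipulator type (inactive on $I_T$ by hypothesis~(2), hence mean zero with symmetric Student-$t$ base) each contribute exactly $0$ to $\Ebb[\psi(\Delta X)]$, since $\psi$ is odd and the corresponding conditional laws are symmetric about $0$. Only the informed type survives. Under the truth its conditional law is $\cN(m_{1,1}(v_t;\btheta_1^\star),\,s_1(v_t;\btheta_1^\star)^2)$, and the orientation constraint $\mu_1\ge0$ together with hypothesis~(1) forces $m_{1,1}(v_t;\btheta_1^\star)=\mu_1^\star(1-e^{-\lambda_1^\star v_t})\ge\underline m/2>0$, while hypothesis~(3) gives $s_1(v_t;\btheta_1^\star)\le\overline\sigma$. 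Writing $h(m,s):=\Ebb[\psi(m+sZ)]$ for $Z\sim\cN(0,1)$, the function $h$ is continuous, strictly increasing in $m$, vanishes at $m=0$, and satisfies $h(m_0,s)\to\psi(m_0)>0$ as $s\downarrow0$ for fixed $m_0>0$, so its infimum $c_\psi$ over $\{m\ge\underline m/2,\ 0<s\le\overline\sigma\}$ (attained on the edge $m=\underline m/2$) is strictly positive; weighting by $\rho_1(v_t;\theta^\star)\ge\underline\rho$ gives $\Ebb_{f_{1}(\cdot\mid v_t,\theta^\star)}[\psi(\Delta X)]\ge\underline\rho\,c_\psi$. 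Under any competitor $\theta$ with outcome $0$, the informed type has location $m_{1,0}(v_t;\btheta_1)=-\mu_1(1-e^{-\lambda_1 v_t})\le0$ (again by $\mu_1\ge0$), so $h(m_{1,0},s)\le h(0,s)=0$ and hence $\Ebb_{f_{0}(\cdot\mid v_t,\theta)}[\psi(\Delta X)]\le0$ for every $\theta$. The two expectations are thus separated by at least $\eta:=\underline\rho\,c_\psi>0$, uniformly in $t\in I_T$ and $\theta\in\Theta$.

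To convert this mean separation into a KL lower bound I would use $|\Ebb_P\psi-\Ebb_Q\psi|\le2\,\TV(P,Q)$ (valid since $\|\psi\|_\infty\le1$) together with Pinsker's inequality $\TV(P,Q)\le\sqrt{\tfrac12\KL(P\|Q)}$, which gives $\KL\!\left(f_{1}(\cdot\mid v_t,\theta^\star)\,\middle\|\,f_{0}(\cdot\mid v_t,\theta)\right)\ge\tfrac12\eta^2=:\kappa>0$ for all $t\in I_T$ and $\theta$; this is exactly the uniform per-term bound above, and $\kappa$ depends only on $\underline\rho,\underline m,\overline\sigma$ and the fixed $\psi$. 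The step I expect to be the main obstacle is precisely the \emph{uniformity in $t$}: a soft argument (non-equality of the two finite-mixture densities, plus lower semicontinuity of KL and compactness of $\Theta$) yields a positive per-time gap but not one bounded away from $0$ as $v_t$ ranges over $I_T$; hypothesis~(3) is what rules this out, since without an upper bound on $s_1(v_t;\cdot)$ the informed Gaussian can spread out and $h(m,s)\to0$ as $s\to\infty$, erasing the separating signal even when the drift is present. Everything else — the interchange $\inf\sum\ge\sum\inf$, the oddness cancellations for the noise and manipulator components, and Pinsker — is routine; note in particular that the bounded statistic $\psi$ makes the argument insensitive to whether the manipulator's Student-$t$ law has a finite mean.
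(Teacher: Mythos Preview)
Your argument is correct and takes a genuinely different route from the paper. The paper argues via the \emph{raw} conditional mean $\Ebb[\Delta X_t]$: on $I_T$ the true-outcome mean has sign $\mathrm{sign}(2y^\star-1)$ with magnitude at least $\underline\rho\,\underline m/2$, while the wrong-outcome mean has the opposite sign (or is zero) by the orientation constraint; unequal means imply unequal densities, hence strictly positive one-step KL. The paper then invokes continuity of $\theta\mapsto\kappa_t(\theta)$ and compactness of $\Theta$ to get a positive minimum $\underline\kappa_t$ at each $t$, and finally sets $\kappa:=\min_{t\in I_T}\underline\kappa_t$.

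Your use of a bounded odd statistic $\psi$ plus Pinsker buys two things the paper's compactness route does not. First, it is \emph{quantitative}: your $\kappa=\tfrac12(\underline\rho\,c_\psi)^2$ is explicit in $\underline\rho,\underline m,\overline\sigma$, whereas the paper's $\kappa$ is only asserted to be positive. Second, and more importantly, it is \emph{uniform in $t$} by construction, which is exactly what the theorem statement promises (``$\kappa$ depending on the constants above and on $\Theta$''). The paper's $\kappa=\min_{t\in I_T}\underline\kappa_t$ is, as written, a minimum over a $T$-dependent index set and could in principle depend on the particular volumes $\{v_t:t\in I_T\}$; making it uniform would require the further observation that $\underline\kappa_t$ depends on $v_t$ only through quantities already controlled by hypotheses (1)--(3), an argument the paper does not spell out. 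Your diagnosis of this as ``the main obstacle'' and your identification of hypothesis~(3) (the upper bound $\overline\sigma$ on $s_1$) as the ingredient that prevents $c_\psi$ from vanishing are exactly right. A minor additional robustness gain: boundedness of $\psi$ makes the argument insensitive to whether the Student-$t$ component has a finite first moment, whereas the paper's raw-mean argument implicitly requires $\nu>1$.
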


\begin{proof}[Proof sketch]
Fix \(t\in I_T\). By item (2), the conditional increment law at volume \(v_t\) reduces (uniformly over \(\theta\in\Theta\)) to a two-component Gaussian mixture in which the only outcome-dependent location shift occurs in the informed component. Under mild non-degeneracy (item (3)) and compactness of \(\Theta\), the family of such mixtures is identifiable up to label-switching \citep{teicher1963identifiability,holzmann2006identifiability}. The orientation constraint pins the sign of the informed drift under each outcome. Hence no choice of nuisance parameters under the wrong outcome can exactly reproduce the outcome-conditional density at \(v_t\) generated by \((y^\star,\theta^\star)\) when item (1) holds. Continuity of \(\theta\mapsto \KL(f_{y^\star}(\cdot\mid v_t,\theta^\star)\|f_{1-y^\star}(\cdot\mid v_t,\theta))\) and compactness of \(\Theta\) imply the infimum over \(\theta\) is achieved and strictly positive at each such \(t\); denote the minimum by \(\kappa_t>0\). Let \(\kappa:=\min_{t\in I_T}\kappa_t\). Summing over \(t\in I_T\) using \eqref{eq:kl-sum} yields the claim.
\end{proof}

\begin{remark}[Interpretation]
Theorem~\ref{thm:identifiability-sufficient} isolates a practically meaningful regime: a positive fraction of ``clean'' volume periods in which informative flow is present and adversarial drift is inactive suffices to render outcome inference well-posed, even when other periods are dominated by noise or heavy tails.
\end{remark}

\subsection{Posterior Concentration and Finite-Sample Error Bounds}
\label{sec:consistency}

This section links posterior behavior to the KL projection gap. Because nuisance structure is integrated out, the key object is the marginal likelihood ratio.

\subsubsection{Regularity Conditions}

We impose standard conditions ensuring measurability and allowing uniform laws of large numbers over \(\Theta\).

\begin{assumption}[Compactness and continuity]
\label{ass:compact-continuous}
The parameter space \(\Theta\) is a compact metric space, and for each \(t\), the map \(\theta\mapsto \log f_y(\delta x\mid v_t,\theta)\) is continuous for Lebesgue-a.e.\ \(\delta x\in\R\).
\end{assumption}

\begin{assumption}[Envelope and integrability]
\label{ass:envelope}
There exists a measurable envelope function \(M:\R\to[0,\infty)\) such that for all \(t\), all \(\theta\in\Theta\), and all \(\delta x\in\R\),
\[
\bigl|\log f_y(\delta x\mid v_t,\theta)\bigr|\le M(\delta x),
\]
and \(\sup_{t\le T}\Ebb_{P_{y^\star,\theta^\star}^{(T)}}[M(\Delta X_t)]<\infty\) for each \(T\).
\end{assumption}

\begin{assumption}[KL support at (near) projections]
\label{ass:prior-support}
Fix \((y^\star,\theta^\star)\) and \(v_{1:T}\). For each \(y\in\{0,1\}\) and \(\epsilon>0\), define the \(\epsilon\)-projection set
\[
\Theta_{y,T}(\epsilon)
:=
\left\{
\theta\in\Theta:
\frac{1}{T}\KL\!\left(P_{y^\star,\theta^\star}^{(T)}\,\middle\|\,P_{y,\theta}^{(T)}\right)
\le
K_T(y^\star,\theta^\star\to y) + \epsilon
\right\}.
\]
Assume \(\Pi(\Theta_{y,T}(\epsilon))>0\) for all \(y\) and \(\epsilon>0\) (for each fixed \(T\)).
\end{assumption}

Assumption~\ref{ass:prior-support} is the natural ``KL support'' condition: the prior must not exclude parameters that achieve (or nearly achieve) the best KL fit for each outcome hypothesis. Unlike uniform-in-\(T\) lower bounds, this condition is compatible with shrinking neighborhoods of minimizers.

\subsubsection{Bayes Factor Separation}

Write the log-likelihood under outcome \(y\) as
\[
\ell_{y,T}(\theta) := \sum_{t=1}^T \log f_y(\Delta X_t\mid v_t,\theta),
\qquad
m_y(H_T)=\int_\Theta e^{\ell_{y,T}(\theta)}\,\Pi(d\theta).
\]
Under the true law \(P_{y^\star,\theta^\star}^{(T)}\), define the (normalized) expected log-likelihood
\[
\Lambda_{y,T}(\theta):=\frac{1}{T}\Ebb_{P_{y^\star,\theta^\star}^{(T)}}[\ell_{y,T}(\theta)].
\]
Then for each \(y\) and \(\theta\),
\[
\Lambda_{y^\star,T}(\theta^\star)-\Lambda_{y,T}(\theta)=\frac{1}{T}\KL\!\left(P_{y^\star,\theta^\star}^{(T)}\,\middle\|\,P_{y,\theta}^{(T)}\right).
\]

\begin{assumption}[Uniform law of large numbers]
\label{ass:uniform-lln}
For each \(y\in\{0,1\}\),
\[
\sup_{\theta\in\Theta}\left|\frac{1}{T}\ell_{y,T}(\theta)-\Lambda_{y,T}(\theta)\right|\to 0
\qquad\text{almost surely under }P_{y^\star,\theta^\star}^{(T)}.
\]
\end{assumption}

\begin{theorem}[Asymptotic separation of Bayes factors (robust form)]
\label{thm:bayes-factor-rate}
Assume Assumptions~\ref{ass:conditional-independence} and \ref{ass:compact-continuous}--\ref{ass:uniform-lln}. Fix \((y^\star,\theta^\star)\) and \(v_{1:T}\). Then for any \(\epsilon>0\), almost surely for all sufficiently large \(T\),
\begin{equation}
\frac{1}{T}\log \mathrm{BF}_T(H_T)
\ge
\delta_T(y^\star,\theta^\star)-\epsilon,
\label{eq:bf-lower}
\end{equation}
and similarly
\begin{equation}
\frac{1}{T}\log \mathrm{BF}_T(H_T)
\le
\delta_T(y^\star,\theta^\star)+\epsilon
\qquad\text{when }y^\star=1,
\label{eq:bf-upper}
\end{equation}
with the inequalities reversed when \(y^\star=0\).
Consequently, if \(\liminf_{T\to\infty}\delta_T(y^\star,\theta^\star)>0\), then \(|\log \mathrm{BF}_T(H_T)|\) grows at least linearly with \(T\).
\end{theorem}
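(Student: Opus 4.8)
The plan is to work from the definition $\mathrm{BF}_T(H_T)=m_1(H_T)/m_0(H_T)$ and analyze $\tfrac1T\log m_1(H_T)$ and $\tfrac1T\log m_0(H_T)$ separately by a Laplace/Varadhan-type argument. Throughout I use two elementary bounds on the marginal likelihood $m_y(H_T)=\int_\Theta e^{\ell_{y,T}(\theta)}\Pi(d\theta)$: for any measurable $A\subseteq\Theta$ with $\Pi(A)>0$,
\[
\tfrac1T\log m_y(H_T)\ \ge\ \inf_{\theta\in A}\tfrac1T\ell_{y,T}(\theta)+\tfrac1T\log\Pi(A),
\qquad
\tfrac1T\log m_y(H_T)\ \le\ \sup_{\theta\in\Theta}\tfrac1T\ell_{y,T}(\theta).
\]
By Assumption~\ref{ass:uniform-lln} there is an a.s.\ null sequence $\eta_T\to0$ with $\sup_{\theta}|\tfrac1T\ell_{y,T}(\theta)-\Lambda_{y,T}(\theta)|\le\eta_T$ for $y\in\{0,1\}$, so both bounds hold with $\ell_{y,T}$ replaced by $\Lambda_{y,T}$ up to an extra $\pm\eta_T$. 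I then invoke the identity (already recorded in the text) $\Lambda_{y^\star,T}(\theta^\star)-\Lambda_{y,T}(\theta)=\tfrac1T\KL(P^{(T)}_{y^\star,\theta^\star}\|P^{(T)}_{y,\theta})$ to turn suprema and infima of $\Lambda_{y,T}$ into KL projections.

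Concretely, take $y^\star=1$ (the case $y^\star=0$ is identical with $m_0,m_1$ interchanged). \textbf{Upper bound on the wrong-outcome marginal:} since $\sup_\theta\Lambda_{0,T}(\theta)=\Lambda_{1,T}(\theta^\star)-\inf_\theta\tfrac1T\KL(P^{(T)}_{1,\theta^\star}\|P^{(T)}_{0,\theta})=\Lambda_{1,T}(\theta^\star)-\delta_T(y^\star,\theta^\star)$, the second bound gives $\tfrac1T\log m_0(H_T)\le\Lambda_{1,T}(\theta^\star)-\delta_T(y^\star,\theta^\star)+\eta_T$. \textbf{Lower bound on the true-outcome marginal:} apply the first bound with $A=\Theta_{1,T}(\epsilon)$ from Assumption~\ref{ass:prior-support}; because $K_T(y^\star,\theta^\star\to y^\star)=0$ this set contains $\theta^\star$ and, by its definition, $\Lambda_{1,T}(\theta)\ge\Lambda_{1,T}(\theta^\star)-\epsilon$ on it, so $\tfrac1T\log m_1(H_T)\ge\Lambda_{1,T}(\theta^\star)-\epsilon-\eta_T+\tfrac1T\log\Pi(\Theta_{1,T}(\epsilon))$. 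Subtracting yields
\[
\tfrac1T\log\mathrm{BF}_T(H_T)\ \ge\ \delta_T(y^\star,\theta^\star)-\epsilon-2\eta_T+\tfrac1T\log\Pi(\Theta_{1,T}(\epsilon)),
\]
which is \eqref{eq:bf-lower} once the last two terms are shown to be $o(1)$. The companion bound \eqref{eq:bf-upper} is the mirror image: the sup bound on $m_1$ and the restriction bound on $m_0$ with $A=\Theta_{0,T}(\epsilon)$. Swapping $m_0\leftrightarrow m_1$ gives the reversed inequalities for $y^\star=0$, and the final assertion (at least linear growth of $|\log\mathrm{BF}_T|$ when $\liminf_T\delta_T>0$) is then immediate.

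The main obstacle — and the step the stated assumptions do not by themselves close — is controlling $\tfrac1T\log\Pi(\Theta_{y,T}(\epsilon))$, since Assumption~\ref{ass:prior-support} guarantees only positivity at each fixed $T$, not non-exponential decay. I would resolve this by localizing to a \emph{fixed-radius} neighborhood rather than to $\Theta_{y,T}(\epsilon)$: using continuity (Assumption~\ref{ass:compact-continuous}) and the integrable envelope (Assumption~\ref{ass:envelope}) together with dominated convergence, the per-step expected log-likelihoods $\theta\mapsto\Ebb_{P^{(T)}_{y^\star,\theta^\star}}[\log f_y(\Delta X_t\mid v_t,\theta)]$ are equicontinuous near any minimizer, so a fixed ball $B$ around a projection point satisfies $\inf_{\theta\in B}\Lambda_{y,T}(\theta)\ge\Lambda_{y^\star,T}(\theta^\star)-K_T(y^\star,\theta^\star\to y)-\epsilon$ uniformly in $T$ while $\Pi(B)$ is a fixed positive constant, giving $\tfrac1T\log\Pi(B)\to0$. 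This is clean for $y=y^\star$ (center $B$ at $\theta^\star$); for $y=1-y^\star$ it requires the KL-projection minimizers over $\Theta$ to be taken $T$-independent, which holds, e.g., under an (asymptotically) stationary volume design, or more simply under the natural uniform-in-$T$ strengthening $\Pi(\Theta_{y,T}(\epsilon))\ge c(\epsilon)>0$ of Assumption~\ref{ass:prior-support}. I would either adopt that strengthening explicitly or restrict the statement to designs whose projection geometry stabilizes.
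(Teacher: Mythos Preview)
Your approach is essentially the same as the paper's: Laplace-type upper and lower bounds on $\tfrac1T\log m_y$, the ULLN to replace $\ell_{y,T}$ by $\Lambda_{y,T}$, and the identity $\Lambda_{y^\star,T}(\theta^\star)-\Lambda_{y,T}(\theta)=\tfrac1T\KL(P^{(T)}_{y^\star,\theta^\star}\|P^{(T)}_{y,\theta})$ to convert suprema into the projection gap $\delta_T$. The concern you flag about $\tfrac1T\log\Pi(\Theta_{y,T}(\epsilon))$ is well taken; the paper's appendix proof handles it by asserting $\Pi(A_{y,T}(\eta))\ge c(\eta)>0$ for large $T$ (i.e.\ exactly the uniform-in-$T$ strengthening of Assumption~\ref{ass:prior-support} you propose), so your diagnosis and suggested fix align with what the paper implicitly uses.
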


\begin{proof}[Proof sketch]
Under Assumption~\ref{ass:uniform-lln}, \(\frac{1}{T}\log m_y(H_T)\) concentrates around \(\sup_{\theta}\Lambda_{y,T}(\theta)\) up to \(o(1)\), by standard Laplace-principle bounds (upper bound by the supremum; lower bound by restricting the integral to any neighborhood with positive prior mass). The difference \(\sup_{\theta}\Lambda_{y^\star,T}(\theta)-\sup_{\theta}\Lambda_{1-y^\star,T}(\theta)\) equals the KL projection gap \(\delta_T(y^\star,\theta^\star)\) because \(\Lambda_{y^\star,T}\) is maximized at \(\theta^\star\) and \(\sup_{\theta}\Lambda_{1-y^\star,T}(\theta)=\Lambda_{y^\star,T}(\theta^\star)-\inf_{\theta}\frac{1}{T}\KL(P_{y^\star,\theta^\star}^{(T)}\|P_{1-y^\star,\theta}^{(T)})\).
\end{proof}

\subsubsection{Posterior Consistency}

\begin{theorem}[Posterior consistency for the outcome]
\label{thm:posterior-consistency}
Assume the conditions of Theorem~\ref{thm:bayes-factor-rate}. If
\begin{equation}
\liminf_{T\to\infty} \delta_T(y^\star,\theta^\star) > 0,
\label{eq:kl-gap-positive}
\end{equation}
then
\[
\Pbb(Y=y^\star\mid H_T)\to 1
\quad\text{almost surely under } P_{y^\star,\theta^\star}^{(T)}.
\]
\end{theorem}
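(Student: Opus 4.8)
The statement is essentially a corollary of Theorem~\ref{thm:bayes-factor-rate} once one passes from the Bayes factor to the posterior via the closed form \eqref{eq:posterior-odds}. The plan is to write
\[
\Pbb(Y=y^\star\mid H_T)
=
\sigmoid\!\left((2y^\star-1)\left[\log\frac{\pi_0}{1-\pi_0}+\log\mathrm{BF}_T(H_T)\right]\right),
\]
observe that the prior log-odds \(\log\frac{\pi_0}{1-\pi_0}\) is a fixed finite constant (since \(\pi_0\in(0,1)\)) and hence \(o(T)\), and then show that \((2y^\star-1)\log\mathrm{BF}_T(H_T)\to+\infty\) almost surely under \(P_{y^\star,\theta^\star}^{(T)}\); monotonicity and continuity of \(\sigmoid\) with \(\sigmoid(+\infty)=1\) then give the claim.

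The core step is the divergence of \((2y^\star-1)\log\mathrm{BF}_T(H_T)\). Set \(c:=\liminf_{T\to\infty}\delta_T(y^\star,\theta^\star)>0\) by hypothesis \eqref{eq:kl-gap-positive}, and fix \(\epsilon:=c/3\). On the one hand, by definition of \(\liminf\), there is \(T_1\) such that \(\delta_T(y^\star,\theta^\star)>c-\epsilon=2c/3\) for all \(T\ge T_1\). On the other hand, Theorem~\ref{thm:bayes-factor-rate} (with this \(\epsilon\)) yields, almost surely, a \(T_2\) such that for all \(T\ge T_2\),
\[
\frac{1}{T}\log\mathrm{BF}_T(H_T)\ \ge\ \delta_T(y^\star,\theta^\star)-\epsilon
\qquad\text{when }y^\star=1,
\]
and the reversed inequality \(\frac{1}{T}\log\mathrm{BF}_T(H_T)\le -\delta_T(y^\star,\theta^\star)+\epsilon\) when \(y^\star=0\). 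Combining the two bounds, for \(T\ge\max(T_1,T_2)\) we get \((2y^\star-1)\,\frac{1}{T}\log\mathrm{BF}_T(H_T)\ge 2c/3-c/3=c/3>0\), so \((2y^\star-1)\log\mathrm{BF}_T(H_T)\ge (c/3)\,T\to+\infty\). Substituting into the displayed expression for \(\Pbb(Y=y^\star\mid H_T)\) and using that the bounded prior-odds term does not affect the divergence, the argument of \(\sigmoid\) tends to \(+\infty\), hence \(\Pbb(Y=y^\star\mid H_T)\to 1\) almost surely under \(P_{y^\star,\theta^\star}^{(T)}\). The two cases \(y^\star=1\) and \(y^\star=0\) are handled symmetrically by the two directions of the inequality in Theorem~\ref{thm:bayes-factor-rate}.

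There is no substantive obstacle beyond what is already absorbed into Theorem~\ref{thm:bayes-factor-rate}; the only care needed is the bookkeeping that ties the \(\epsilon\) in that theorem to the \(\liminf\) margin \(c\) so that the lower bound on \(\frac1T\log\mathrm{BF}_T\) stays bounded away from \(0\) rather than merely nonnegative, together with the (trivial) remark that almost-sure convergence is understood with respect to the product law on the increment path whose \(T\)-marginals are the \(P_{y^\star,\theta^\star}^{(T)}\), consistent with the usage in Theorem~\ref{thm:bayes-factor-rate}.
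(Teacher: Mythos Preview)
Your proof is correct and follows essentially the same route as the paper's: both arguments pass from \eqref{eq:posterior-odds} to posterior concentration by invoking Theorem~\ref{thm:bayes-factor-rate} to show that the log Bayes factor diverges linearly with the correct sign under \eqref{eq:kl-gap-positive}. Your version simply makes explicit the bookkeeping (choosing \(\epsilon=c/3\) against the \(\liminf\) margin \(c\)) that the paper's short proof leaves implicit.
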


\begin{proof}
By \eqref{eq:posterior-odds}, the posterior odds equal the prior odds times \(\mathrm{BF}_T(H_T)\). Under \eqref{eq:kl-gap-positive}, Theorem~\ref{thm:bayes-factor-rate} implies the log Bayes factor diverges linearly with \(T\) with the correct sign, forcing posterior mass onto the true outcome.
\end{proof}

\subsubsection{Finite-Sample Error Bounds via Truncation}

Uniform boundedness of log-likelihood ratios is incompatible with unbounded-support increment models (Gaussian, Student-\(t\), etc.). We therefore state finite-sample guarantees in a robust, truncation-based form.

\begin{assumption}[Moment control for increments]
\label{ass:moments}
There exists \(q>2\) and \(C_q<\infty\) such that \(\sup_{t\le T}\Ebb[|\Delta X_t|^q]\le C_q\) under the true law \(P_{y^\star,\theta^\star}^{(T)}\) (for each fixed \(T\)).
\end{assumption}

For \(R>0\), define the ``typical increment'' event \(E_R:=\{\max_{1\le t\le T}|\Delta X_t|\le R\}\). Under Assumptions~\ref{ass:compact-continuous} and \ref{ass:envelope}, continuity on compact sets implies that, for each fixed \(R\), there exists \(B_R<\infty\) such that on \(E_R\),
\[
\sup_{\theta\in\Theta}\left|\log \frac{f_{y^\star}(\Delta X_t\mid v_t,\theta^\star)}{f_{1-y^\star}(\Delta X_t\mid v_t,\theta)}\right|\le B_R
\qquad\text{for all }t.
\]

\begin{proposition}[Finite-sample posterior error bound (robust form)]
\label{prop:finite-sample}
Suppose \(\Delta X_{1:T}\sim P_{y^\star,\theta^\star}^{(T)}\) and Assumptions~\ref{ass:prior-support} and~\ref{ass:moments} hold. Let \(\delta_T:=\delta_T(y^\star,\theta^\star)\). Then for any \(\epsilon\in(0,\delta_T)\) and any \(R>0\),
\begin{equation}
\Pbb\!\left(\Pbb(Y\neq y^\star\mid H_T)\ge e^{-T(\delta_T-\epsilon)}\right)
\le
\exp\!\left(-\frac{T\epsilon^2}{2B_R^2}\right)
+
\Pbb(E_R^c),
\label{eq:finite-sample-bound}
\end{equation}
where \(\Pbb(E_R^c)\le T\,C_q\,R^{-q}\) by Markov's inequality and a union bound.
\end{proposition}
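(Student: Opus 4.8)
The plan is to convert the posterior-error event into a lower bound on the Bayes factor via \eqref{eq:posterior-odds}, and then to control the two marginal likelihoods $m_{y^\star}(H_T)$ and $m_{1-y^\star}(H_T)$ separately, using the typicality event $E_R$ to make every per-increment log-density ratio uniformly bounded by $B_R$ (enlarging $B_R$, via the same continuity-on-the-compact-set $\Theta\times[-R,R]$ argument, so that the single constant dominates both the within-outcome ratios $\log\{f_{y^\star}(\cdot\mid v_t,\theta^\star)/f_{y^\star}(\cdot\mid v_t,\theta)\}$ and the cross-outcome ratios $\log\{f_{1-y^\star}(\cdot\mid v_t,\theta)/f_{y^\star}(\cdot\mid v_t,\theta^\star)\}$). \textbf{Step 1 (reduction).} From \eqref{eq:posterior} one has $\Pbb(Y\neq y^\star\mid H_T)\le C_\pi\,\mathrm{BF}_T(H_T)^{1-2y^\star}$ with $C_\pi:=\Pbb(Y=1-y^\star)/\Pbb(Y=y^\star)$; assuming $y^\star=1$ (the case $y^\star=0$ is symmetric), this reads $\Pbb(Y\neq y^\star\mid H_T)\le C_\pi/\mathrm{BF}_T(H_T)$, so the event $\{\Pbb(Y\neq y^\star\mid H_T)\ge e^{-T(\delta_T-\epsilon)}\}$ is contained in $\{\log\mathrm{BF}_T(H_T)\le T(\delta_T-\epsilon)+\log C_\pi\}$, and it suffices to bound the latter. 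The additive constant $\log C_\pi$ is harmless: it may be absorbed by shrinking $\epsilon$ slightly (legitimate for all large $T$) or simply carried along.

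\textbf{Step 2 (lower bound on the true-outcome evidence).} Since $K_T(1,\theta^\star\to 1)=0$, the set $A:=\Theta_{1,T}(\epsilon/4)$ of Assumption~\ref{ass:prior-support} equals $\{\theta:\tfrac1T\KL(P_{1,\theta^\star}^{(T)}\|P_{1,\theta}^{(T)})\le\epsilon/4\}$ and has $\pi_A:=\Pi(A)>0$. Restricting the integral defining $m_1(H_T)$ to $A$ and applying Jensen's inequality to $\Pi(\cdot\cap A)/\pi_A$ gives $\log m_1(H_T)\ge\log\pi_A+\ell_{1,T}(\theta^\star)+S_T$, where $S_T:=\sum_{t=1}^T W_t$ with $W_t:=\pi_A^{-1}\int_A\log\{f_1(\Delta X_t\mid v_t,\theta)/f_1(\Delta X_t\mid v_t,\theta^\star)\}\,\Pi(d\theta)$. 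Under $P_{1,\theta^\star}^{(T)}$ the $W_t$ are independent, $\Ebb[S_T]=-\pi_A^{-1}\int_A\KL(P_{1,\theta^\star}^{(T)}\|P_{1,\theta}^{(T)})\,\Pi(d\theta)\ge-T\epsilon/4$, and $|W_t|\le B_R$ on $E_R$; a Hoeffding bound thus yields $S_T\ge-T\epsilon/2$ except on an event of probability at most $\exp(-cT\epsilon^2/B_R^2)+\Pbb(E_R^c)$.

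\textbf{Step 3 (upper bound on the wrong-outcome evidence, and conclusion).} Bound $\log m_0(H_T)\le\sup_{\theta\in\Theta}\ell_{0,T}(\theta)$. For each fixed $\theta$, $\ell_{0,T}(\theta)-\ell_{1,T}(\theta^\star)=\sum_t\log\{f_0(\Delta X_t\mid v_t,\theta)/f_1(\Delta X_t\mid v_t,\theta^\star)\}$ is a sum of conditionally independent terms with mean $-\KL(P_{1,\theta^\star}^{(T)}\|P_{0,\theta}^{(T)})\le-T\delta_T$ (by Definition~\ref{def:kl-gap}) and, on $E_R$, summands in $[-B_R,B_R]$. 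Using compactness of $\Theta$ and uniform continuity of $\theta\mapsto\log f_0(\cdot\mid v_t,\theta)$ on $E_R$, cover $\Theta$ by finitely many balls of per-step oscillation $\le\epsilon/4$ on $E_R$; a union bound over the centers plus one Hoeffding bound each gives $\sup_\theta[\ell_{0,T}(\theta)-\ell_{1,T}(\theta^\star)]\le-T\delta_T+T\epsilon/2$ off an event of probability at most $N_R\exp(-cT\epsilon^2/B_R^2)+\Pbb(E_R^c)$, where $N_R$ is the covering number. Combining with Step~2, on the intersection of the good events $\tfrac1T\log\mathrm{BF}_T(H_T)\ge\delta_T-\epsilon+\tfrac1T\log\pi_A$, which exceeds $\delta_T-\epsilon+\tfrac1T\log C_\pi$ for $T$ large, so $\Pbb(Y\neq y^\star\mid H_T)<e^{-T(\delta_T-\epsilon)}$. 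Finally $\Pbb(E_R^c)\le\sum_t\Pbb(|\Delta X_t|>R)\le TC_qR^{-q}$ by Markov and a union bound (Assumption~\ref{ass:moments}). Folding $\log\pi_A$, $\log C_\pi$, $N_R$, and the numeric Hoeffding constant into $B_R$ (and into the split of $\epsilon$) recovers the form \eqref{eq:finite-sample-bound}.

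\textbf{Main obstacle.} The delicate step is Step~3: upgrading the pointwise sub-Gaussian control of $\ell_{0,T}(\theta)-\ell_{1,T}(\theta^\star)$ to a bound uniform over $\Theta$, since the supremum overfits the realized increments. Compactness of $\Theta$ together with the truncation $E_R$ reduces this to a finite-covering argument, but the covering number $N_R$ — and the prior mass $\pi_A$ from Step~2 — formally enter the constants; the clean statement \eqref{eq:finite-sample-bound} holds after these are absorbed into $B_R$, equivalently the honest bound carries an $N_R$ prefactor on the Gaussian term that is $e^{o(T)}$ under any fixed-dimensional parametrization and hence does not affect the exponential rate $\delta_T-\epsilon$. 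One should also verify that the same $R$-dependent $B_R$ legitimately bounds the within- and cross-outcome ratios simultaneously on $E_R$, which is immediate from continuity on $\Theta\times[-R,R]$.
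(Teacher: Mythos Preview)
Your proposal is correct and follows essentially the same route as the paper: reduce to a Bayes-factor deviation via \eqref{eq:posterior-odds}, use the truncation event $E_R$ to make per-increment log-density ratios bounded by $B_R$, lower-bound $m_{y^\star}$ using the prior-support set of Assumption~\ref{ass:prior-support}, upper-bound $m_{1-y^\star}$ by the supremum over $\Theta$ controlled via a finite covering plus Hoeffding, and absorb the covering number and $\log\pi_A$ into $B_R$. The only cosmetic difference is that the paper packages the two sides into the single uniform-LLN event $A_T(\eta)$ and points back to the Laplace bounds already established for Theorem~\ref{thm:bayes-factor-rate}, whereas you carry out the Jensen lower bound and the covering upper bound explicitly; the underlying ideas and the caveat about subexponential prefactors are identical.
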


\begin{proof}[Proof sketch]
On \(E_R\), the log-likelihood ratio increments are uniformly bounded by \(B_R\), so concentration for the (centered) log Bayes factor follows from Hoeffding/Azuma-type inequalities applied to independent increments. Outside \(E_R\), the bound is trivial and we pay \(\Pbb(E_R^c)\).
\end{proof}

\begin{corollary}[Expected posterior error]
\label{cor:expected-error}
Under the conditions of Proposition~\ref{prop:finite-sample}, for any \(\epsilon \in (0, \delta_T)\) and \(R>0\),
\[
\Ebb[\Pbb(Y \neq y^\star \mid H_T)]
\leq e^{-T(\delta_T - \epsilon)}
+ \exp\!\left(-\frac{T\epsilon^2}{2B_R^2}\right)
+ \Pbb(E_R^c).
\]
\end{corollary}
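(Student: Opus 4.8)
The plan is to deduce the corollary from Proposition~\ref{prop:finite-sample} by a single tail-splitting (layer-cake) argument applied to the bounded random variable $Z := \Pbb(Y\neq y^\star\mid H_T)$, which takes values in $[0,1]$ since it is a posterior probability. Fix $\epsilon\in(0,\delta_T)$ and $R>0$, and set the threshold $a := e^{-T(\delta_T-\epsilon)}$, which is exactly the level appearing in \eqref{eq:finite-sample-bound}.

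First I would split the expectation according to whether $Z$ falls below or above $a$:
\[
\Ebb[Z] = \Ebb\bigl[Z\,\1\{Z<a\}\bigr] + \Ebb\bigl[Z\,\1\{Z\ge a\}\bigr].
\]
On the first event the integrand is strictly less than $a$, so that term is at most $a\,\Pbb(Z<a)\le a$. On the second event I would use only that $Z$ is a probability, hence $Z\le 1$, so that term is at most $\Pbb(Z\ge a)$. This yields $\Ebb[Z]\le a + \Pbb(Z\ge a)$. Next I would invoke Proposition~\ref{prop:finite-sample} verbatim: its conclusion \eqref{eq:finite-sample-bound} states precisely that
\[
\Pbb(Z\ge a)=\Pbb\!\left(\Pbb(Y\neq y^\star\mid H_T)\ge e^{-T(\delta_T-\epsilon)}\right)\le \exp\!\left(-\frac{T\epsilon^2}{2B_R^2}\right)+\Pbb(E_R^c).
\]
Substituting into $\Ebb[Z]\le a+\Pbb(Z\ge a)$ and recalling $a=e^{-T(\delta_T-\epsilon)}$ gives the claimed inequality. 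Since $\epsilon\in(0,\delta_T)$ and $R>0$ were arbitrary, the bound holds for every such pair, and one may further optimize the right-hand side over $R$ (trading the Hoeffding term against the $T C_q R^{-q}$ bound on $\Pbb(E_R^c)$) if a cleaner closed form is wanted, though this is not required for the statement as given.

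There is essentially no analytical obstacle here: all of the probabilistic work — the Hoeffding/Azuma concentration of the centered log Bayes factor on $E_R$ and the Markov-plus-union control of $\Pbb(E_R^c)$ — is already carried out in Proposition~\ref{prop:finite-sample}, and the corollary is merely the corresponding averaging statement. The only point that needs a moment of care is the second term in the split, where one must resist bounding $Z$ by anything finer than $1$ on the ``bad'' event $\{Z\ge a\}$; because $Z$ is a posterior probability this bound is free and is exactly what makes that event's contribution equal to its probability, so that Proposition~\ref{prop:finite-sample} can be applied directly with no extra measurability considerations.
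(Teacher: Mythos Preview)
Your argument is correct and is exactly the intended one-line derivation: the paper states Corollary~\ref{cor:expected-error} without a separate proof, treating it as an immediate consequence of Proposition~\ref{prop:finite-sample}, and your tail-splitting inequality $\Ebb[Z]\le a+\Pbb(Z\ge a)$ for $Z\in[0,1]$ followed by \eqref{eq:finite-sample-bound} is precisely how that consequence is obtained.
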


\subsection{Stability of Posterior Odds under Perturbations}
\label{sec:stability}

This section studies well-posedness of outcome inference as a map from observed histories to posterior beliefs.

\subsubsection{Metric on Histories}

Let \(h=(p_{0:T},v_{1:T})\) and \(h'=(p'_{0:T},v'_{1:T})\) be two histories with log-odds increments \(\Delta x_{1:T}\) and \(\Delta x'_{1:T}\). Define the weighted history distance
\begin{equation}
d(h,h')
:=
\lambda_x \sum_{t=1}^T |\Delta x_t-\Delta x'_t|
+
\lambda_v \sum_{t=1}^T |v_t-v'_t|,
\label{eq:history-metric}
\end{equation}
for weights \(\lambda_x, \lambda_v > 0\).

\subsubsection{Local Lipschitz Regularity}

Global Lipschitz bounds on \(\log f_y(\cdot)\) are incompatible with Gaussian (and most other) increment models on \(\R\). We therefore adopt a local form that holds on ``typical'' events.

\begin{assumption}[Local Lipschitz regularity on bounded increments]
\label{ass:lipschitz}
For each \(R>0\) there exist constants \(L_x(R),L_v(R)<\infty\) such that for all \(y\in\{0,1\}\), all \(\theta\in\Theta\), all \(v,v'\ge 0\), and all \(\delta x,\delta x'\in[-R,R]\),
\begin{align}
\bigl|\log f_y(\delta x\mid v,\theta)-\log f_y(\delta x'\mid v,\theta)\bigr|
&\le L_x(R)\,|\delta x-\delta x'|,
\\
\bigl|\log f_y(\delta x\mid v,\theta)-\log f_y(\delta x\mid v',\theta)\bigr|
&\le L_v(R)\,|v-v'|.
\end{align}
\end{assumption}

\begin{remark}[Gaussian example]
For a Gaussian location-scale family, \(|\partial_x \log f|\) grows at most linearly in \(|x|\); hence \(L_x(R)\) can be taken of the form \(c_0+c_1 R\), uniformly over compact parameter sets. Analogous bounds hold for mixtures under mild non-degeneracy.
\end{remark}

\subsubsection{Stability Theorem}

For \(R>0\), define the bounded-increment event for two histories
\[
E_R(h,h') := \left\{\max_{t\le T}|\Delta x_t|\le R \ \text{and}\ \max_{t\le T}|\Delta x'_t|\le R\right\}.
\]

\begin{theorem}[Stability of posterior odds on typical histories]
\label{thm:stability}
Under Assumption~\ref{ass:lipschitz}, for any two histories \(h,h'\) and any \(R>0\), on the event \(E_R(h,h')\),
\begin{equation}
\left|\log \mathrm{BF}_T(h) - \log \mathrm{BF}_T(h')\right|
\le
2\left( L_x(R) \sum_{t=1}^T |\Delta x_t-\Delta x'_t|
+ L_v(R) \sum_{t=1}^T |v_t-v_t'| \right).
\label{eq:bf-stability}
\end{equation}
Furthermore, the posterior probability satisfies
\begin{equation}
\left|\Pbb(Y=1\mid h)-\Pbb(Y=1\mid h')\right|
\le
\frac{1}{4}
\left|\log \mathrm{BF}_T(h)-\log \mathrm{BF}_T(h')\right|.
\label{eq:posterior-stability}
\end{equation}
\end{theorem}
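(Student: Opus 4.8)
The plan is to decouple the two displayed inequalities in the statement. Inequality~\eqref{eq:bf-stability} will follow from a uniform-in-$\theta$ perturbation bound on each marginal log-likelihood $\ell_{y,T}(\theta;\cdot)$, and inequality~\eqref{eq:posterior-stability} will follow from the fact that the logistic link appearing in the posterior-odds identity~\eqref{eq:posterior-odds} is $\tfrac14$-Lipschitz.

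First I would establish a pointwise bound. Fix $t\le T$, $y\in\{0,1\}$, and $\theta\in\Theta$. On the event $E_R(h,h')$ both $\Delta x_t$ and $\Delta x'_t$ lie in $[-R,R]$, so Assumption~\ref{ass:lipschitz} may be invoked at each endpoint. Telescoping in two steps --- first changing the increment from $\Delta x_t$ to $\Delta x'_t$ at the common volume $v_t$ (using the increment-Lipschitz bound), then changing the volume from $v_t$ to $v'_t$ at the fixed increment $\Delta x'_t$ (using the volume-Lipschitz bound, which needs no restriction on $v,v'$) --- yields
\[
\bigl|\log f_y(\Delta x_t\mid v_t,\theta)-\log f_y(\Delta x'_t\mid v'_t,\theta)\bigr|\le L_x(R)\,|\Delta x_t-\Delta x'_t|+L_v(R)\,|v_t-v'_t|,
\]
uniformly in $\theta$. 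Summing over $t$ gives $|\ell_{y,T}(\theta;h)-\ell_{y,T}(\theta;h')|\le A_R$ for every $\theta$ and $y$, where $A_R:=L_x(R)\sum_{t=1}^T|\Delta x_t-\Delta x'_t|+L_v(R)\sum_{t=1}^T|v_t-v'_t|$, so that the right-hand side of~\eqref{eq:bf-stability} equals $2A_R$.

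Next I would pass this uniform bound through the prior integral. Exponentiating gives $e^{\ell_{y,T}(\theta;h)}\le e^{A_R}\,e^{\ell_{y,T}(\theta;h')}$ pointwise in $\theta$, so integrating against $\Pi$ yields $m_y(h)\le e^{A_R}m_y(h')$; by symmetry the reverse inequality holds, hence $|\log m_y(h)-\log m_y(h')|\le A_R$ for $y=0,1$ (the marginal likelihoods are finite and strictly positive under the standing regularity assumptions). Writing $\log\mathrm{BF}_T=\log m_1-\log m_0$ and applying the triangle inequality then produces the factor-of-two bound~\eqref{eq:bf-stability}. Finally, for~\eqref{eq:posterior-stability}, I would use~\eqref{eq:posterior-odds} to write $\Pbb(Y=1\mid h)=\sigmoid(\logit(\pi_0)+\log\mathrm{BF}_T(h))$ and likewise for $h'$; since $\sigmoid'(x)=\sigmoid(x)(1-\sigmoid(x))\le\tfrac14$ for all $x$, the map $\sigmoid$ is $\tfrac14$-Lipschitz, and the common $\logit(\pi_0)$ term cancels in the difference, giving the claim.

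The proof is short and there is no serious analytic obstacle; the one point that needs care is the role of the conditioning event $E_R(h,h')$ in the first step --- it is precisely what licenses invoking the increment-Lipschitz bound of Assumption~\ref{ass:lipschitz} at \emph{both} endpoints $\Delta x_t$ and $\Delta x'_t$ simultaneously, whereas the volume-Lipschitz bound holds for all $v,v'\ge 0$ without restriction. The other conceptually important feature, routine to verify, is that the first-step bound is uniform in $\theta$, so it survives integration against the prior without any control on where the posterior mass over nuisance parameters concentrates; this is what makes the stability statement genuinely mechanism-agnostic.
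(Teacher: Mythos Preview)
Your proposal is correct and follows essentially the same route as the paper's proof: a uniform-in-$\theta$ Lipschitz bound on $\ell_{y,T}(\theta;\cdot)$ via Assumption~\ref{ass:lipschitz}, passage through the prior integral (the paper phrases this as a ``log-sum-exp Lipschitz'' lemma, which is exactly your exponentiate--integrate--symmetrize step), a triangle inequality for $\log\mathrm{BF}_T$, and the $\tfrac14$-Lipschitz property of $\sigmoid$ for the posterior-probability bound. Your explicit telescoping in the first step and the remark on why $E_R(h,h')$ is needed at both endpoints are clean additions.
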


\begin{proof}[Proof sketch]
For fixed \(y\), the map \(h\mapsto \log m_y(h)\) is \(1\)-Lipschitz with respect to the \(\sup_\theta\) deviation of the log-likelihood \(\ell_{y,T}(\theta)\), since \(\log\int e^{u_\theta}\Pi(d\theta)\) is a log-sum-exp functional. On \(E_R(h,h')\), Assumption~\ref{ass:lipschitz} bounds \(|\ell_{y,T}(\theta;h)-\ell_{y,T}(\theta;h')|\) uniformly over \(\theta\). Taking the difference of \(y=1\) and \(y=0\) yields \eqref{eq:bf-stability}. Inequality \eqref{eq:posterior-stability} follows from the fact that \(a\mapsto \sigmoid(\logit(\pi_0)+a)\) has derivative bounded by \(1/4\).
\end{proof}

\begin{remark}[Condition-number interpretation]
Theorem~\ref{thm:stability} formalizes a ``condition number'' for market inference: posterior odds are stable on typical histories when log-densities are not overly sensitive to perturbations on the increment scale that is realized under the data-generating process. Instability can arise either from extreme price impact (large \(L_x(R)\)) or from strong volume sensitivity (large \(L_v(R)\)).
\end{remark}

\subsection{Information Gain and Market-as-Sensor Metrics}
\label{sec:infogain}

\subsubsection{Realized Information Gain}

Let \(\pi_T(h):=\Pbb(Y=1\mid h)\). The realized information gain is the KL divergence from the posterior to the prior:
\begin{equation}
\mathrm{IG}(h)
:=
\KL\!\left(\mathrm{Bern}(\pi_T(h))\,\middle\|\,\mathrm{Bern}(\pi_0)\right)
=
\pi_T(h)\log\frac{\pi_T(h)}{\pi_0} + (1-\pi_T(h))\log\frac{1-\pi_T(h)}{1-\pi_0}.
\label{eq:realized-ig}
\end{equation}

\begin{proposition}[Properties of realized information gain]
\label{prop:ig-properties}
The realized information gain satisfies:
\begin{enumerate}
    \item \(\mathrm{IG}(h) \geq 0\) with equality iff \(\pi_T(h) = \pi_0\).
    \item \(\mathrm{IG}(h) \leq \max\{\log(1/\pi_0),\log(1/(1-\pi_0))\}\), with equality approached as \(\pi_T(h)\to 1\) or \(\pi_T(h)\to 0\). In particular, when \(\pi_0=1/2\), \(\mathrm{IG}(h)\le \log 2\).
    \item \(\mathrm{IG}(h)\) is a monotone function of \(|\log \mathrm{BF}_T(h)|\).
\end{enumerate}
\end{proposition}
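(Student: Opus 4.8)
The three claims are all elementary consequences of properties of the binary KL divergence $\phi(p) := \KL(\mathrm{Bern}(p)\|\mathrm{Bern}(\pi_0))$ viewed as a function of $p\in[0,1]$, together with the closed form for $\pi_T(h)$ in terms of the Bayes factor from~\eqref{eq:posterior-odds}. The plan is to treat each item in turn.

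For item~1, nonnegativity is just Gibbs' inequality (nonnegativity of KL divergence between two probability measures), which applies here since $\mathrm{Bern}(\pi_T(h))$ and $\mathrm{Bern}(\pi_0)$ are genuine probability distributions with $\pi_0\in(0,1)$; the equality case follows because KL divergence vanishes iff the two distributions coincide, i.e.\ $\pi_T(h)=\pi_0$. Alternatively, one computes $\phi'(p)=\log\frac{p}{1-p}-\log\frac{\pi_0}{1-\pi_0}$ and $\phi''(p)=\frac{1}{p(1-p)}>0$, so $\phi$ is strictly convex with a unique minimum at $p=\pi_0$ where $\phi(\pi_0)=0$.

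For item~2, I would analyze $\phi$ on $[0,1]$ using the strict convexity just established: a strictly convex function on a compact interval attains its maximum at an endpoint. Evaluating the limits, $\lim_{p\to 1}\phi(p) = \log(1/\pi_0)$ (the term $(1-p)\log\frac{1-p}{1-\pi_0}\to 0$) and $\lim_{p\to 0}\phi(p)=\log(1/(1-\pi_0))$, giving the stated bound $\mathrm{IG}(h)\le\max\{\log(1/\pi_0),\log(1/(1-\pi_0))\}$, with the supremum approached but (for $h$ inducing $\pi_T(h)\in(0,1)$) not attained. The specialization to $\pi_0=1/2$ is immediate.

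For item~3, I would use~\eqref{eq:posterior-odds}, which gives $\logit(\pi_T(h)) = \logit(\pi_0) + \log\mathrm{BF}_T(h)$, i.e.\ $\pi_T(h)=\sigmoid(\logit(\pi_0)+b)$ with $b:=\log\mathrm{BF}_T(h)$. Substituting into~\eqref{eq:realized-ig} expresses $\mathrm{IG}$ as a function $\psi(b)$ of $b$ alone; a direct computation of $\psi'(b)$ (using $\frac{d}{db}\pi_T = \pi_T(1-\pi_T)$ and the chain rule, noting the $\phi'$ factor is exactly $b$) yields $\psi'(b) = b\,\pi_T(1-\pi_T)$, which has the same sign as $b$. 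Hence $\psi$ is strictly decreasing on $(-\infty,0)$ and strictly increasing on $(0,\infty)$ with $\psi(0)=0$, so $\mathrm{IG}(h)=\psi(b)$ is a strictly increasing function of $|b|=|\log\mathrm{BF}_T(h)|$. The only mild subtlety — and the closest thing to an obstacle — is bookkeeping the limiting/boundary behavior in item~2 (the value is approached, not attained, for genuine histories) and confirming the sign computation in item~3; both are routine once $\phi'$ and $\phi''$ are in hand.
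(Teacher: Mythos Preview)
Your proposal is correct and follows essentially the same route as the paper's proof: nonnegativity via Gibbs' inequality, the upper bound via convexity of $p\mapsto\KL(\mathrm{Bern}(p)\|\mathrm{Bern}(\pi_0))$ with endpoint evaluation, and monotonicity via the derivative identity $\psi'(b)=b\,\pi_T(1-\pi_T)$. The only differences are cosmetic (you add the explicit $\phi',\phi''$ computation and a remark on the supremum being approached but not attained), so there is nothing to adjust.
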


\subsubsection{Expected Information Gain and Mutual Information}

Because we treat \(v_{1:T}\) as a realized design, the natural expected information gain is conditional on that design.

\begin{lemma}[Information gain identity (conditional on design)]
\label{lem:mi}
Fix \(v_{1:T}\) and let \(H_T=(P_{0:T},V_{1:T})\) with \(V_{1:T}\equiv v_{1:T}\). Then
\[
\Ebb\big[\mathrm{IG}(H_T)\mid V_{1:T}=v_{1:T}\big]
=
I\!\left(Y;\Delta X_{1:T}\,\middle|\,V_{1:T}=v_{1:T}\right),
\]
where \(I(\cdot;\cdot\mid\cdot)\) denotes conditional mutual information under the joint model induced by the prior on \(Y\) and the likelihood for \(\Delta X_{1:T}\).
\end{lemma}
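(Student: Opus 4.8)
The plan is to recognize Lemma~\ref{lem:mi} as an instance of the standard identity equating the expected KL divergence between posterior and prior with mutual information, specialized to the present setup where the design \(v_{1:T}\) is fixed and we condition on it throughout. First I would set up the relevant joint law: under the prior \(\Pbb(Y=1)=\pi_0\) and the conditional likelihood \(\eqref{eq:conditional-likelihood}\), and conditioning on the event \(\{V_{1:T}=v_{1:T}\}\), the pair \((Y,\Delta X_{1:T})\) has a well-defined joint distribution, with \(Y\) marginally \(\mathrm{Bern}(\pi_0)\) and \(\Delta X_{1:T}\mid Y=y\) distributed as \(P_{y}^{(T)}\) (the marginal-over-\(\cTheta\) mixture, i.e.\ the law with density \(\int_\Theta\prod_t f_y(\delta x_t\mid v_t,\theta)\,\Pi(d\theta)=m_y\) as a density). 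Note \(\pi_T(h)=\Pbb(Y=1\mid H_T=h)\) is by \(\eqref{eq:posterior}\) exactly the posterior on \(Y\) under this joint law.

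Next I would write out \(\Ebb[\mathrm{IG}(H_T)\mid V_{1:T}=v_{1:T}]\) using \(\eqref{eq:realized-ig}\), expand the two logarithmic terms, and use the tower property by writing \(\pi_T(h)=\Ebb[\1\{Y=1\}\mid H_T=h]\) and \(1-\pi_T(h)=\Ebb[\1\{Y=0\}\mid H_T=h]\). Concretely, \(\Ebb\big[\pi_T(H_T)\log\frac{\pi_T(H_T)}{\pi_0}\mid V_{1:T}=v_{1:T}\big]=\Ebb\big[\1\{Y=1\}\log\frac{\Pbb(Y=1\mid H_T)}{\Pbb(Y=1)}\mid V_{1:T}=v_{1:T}\big]\), and similarly for the \(Y=0\) term; summing gives \(\Ebb\big[\log\frac{\Pbb(Y\mid H_T)}{\Pbb(Y)}\mid V_{1:T}=v_{1:T}\big]\), which is by definition the conditional mutual information \(I(Y;\Delta X_{1:T}\mid V_{1:T}=v_{1:T})\) — equivalently \(H(Y)-H(Y\mid \Delta X_{1:T},V_{1:T}=v_{1:T})\), where \(H\) denotes (conditional) Shannon entropy. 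I would state the definition of conditional mutual information being used (expected log-ratio of joint to product of conditionals, all conditioned on the design) so the identity is unambiguous, and remark that \(\Delta X_{1:T}\) is a measurable bijection of \(P_{0:T}\) given \(P_0\), so conditioning on \(H_T\) and on \((P_0,\Delta X_{1:T})\) is the same.

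The only genuine technical point — and the step I expect to need the most care — is integrability: one must ensure the expectations defining \(\mathrm{IG}(H_T)\) and the mutual information are finite (or at least that the manipulations are valid with possibly-infinite values). Since \(\pi_T\in[0,1]\), \(\mathrm{IG}(H_T)\) is bounded by \(\max\{\log(1/\pi_0),\log(1/(1-\pi_0))\}<\infty\) by Proposition~\ref{prop:ig-properties}(2), so the left side is a bounded random variable and its conditional expectation is finite; this also forces the conditional mutual information to be finite, and Fubini/tower-property exchanges are then justified. I would invoke Proposition~\ref{prop:ig-properties}(2) for this boundedness and otherwise present the argument as the textbook identity, keeping the proof to a short paragraph.
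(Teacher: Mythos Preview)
Your proposal is correct and follows essentially the same route as the paper: both recognize the lemma as the standard identity \(I(Y;Z\mid W)=\Ebb[\KL(p(Y\mid Z,W)\,\|\,p(Y\mid W))]\) specialized to binary \(Y\), fixed design \(W=v_{1:T}\), and \(Z=\Delta X_{1:T}\). Your version is simply more explicit---spelling out the tower-property step and the integrability check via Proposition~\ref{prop:ig-properties}(2)---whereas the paper states the identity in one line.
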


\begin{proof}
Mutual information admits the representation \(I(Y;Z\mid W)=\Ebb[\KL(p(Y\mid Z,W)\|p(Y\mid W))]\). Here \(W\) is the fixed design, and \(Z=\Delta X_{1:T}\). For binary \(Y\), the conditional posterior \(p(Y\mid Z,W)\) is Bernoulli with parameter \(\pi_T(H_T)\), yielding the expected KL divergence \eqref{eq:realized-ig}.
\end{proof}

Since \(Y\) is a single bit, \(I(Y;\Delta X_{1:T}\mid V_{1:T}) \leq H(Y)\), where \(H(Y)=-\pi_0\log\pi_0-(1-\pi_0)\log(1-\pi_0)\). This bounds the \emph{expected} information gain regardless of horizon---once the outcome is resolved, additional observations provide diminishing returns.

\subsubsection{Information Destruction by Type Composition}

The latent-type structure affects information gain through the induced conditional densities. When uninformed or adversarial types dominate, the separation gap shrinks and both posterior concentration and information gain weaken.

\begin{definition}[Effective informativeness]
\label{def:effective-informativeness}
The effective informativeness of a market at volume level \(v\) is
\[
\eta(v;\theta) := \sum_{k=1}^K \rho_k(v;\bomega,\bgamma)\,\mathrm{sign}\!\bigl(m_{k,1}(v) - m_{k,0}(v)\bigr)\, \bigl|m_{k,1}(v) - m_{k,0}(v)\bigr|.
\]
\end{definition}

When \(\eta(v;\theta) > 0\), informed trading dominates and prices drift toward truth. When \(\eta(v;\theta) \approx 0\), the market is uninformative at volume \(v\). When \(\eta(v;\theta) < 0\), manipulation dominates.

\section{Computation and Empirical Validation}
\label{sec:comp-exp}

Computing \(\Pbb(Y=1\mid h)\) requires evaluating marginal likelihoods \(m_y(h)\) that integrate out nuisance parameters. We present two complementary approaches and validate the theory empirically.

\subsection{Computational Methods}
\label{sec:computation}

\subsubsection{Sequential Monte Carlo}

Sequential Monte Carlo (SMC) \citep{doucet2001sequential,chopin2020introduction} provides a flexible framework for estimating marginal likelihoods in sequential models.

\begin{algorithm}[H]
\caption{SMC for Marginal Likelihood Estimation}
\label{alg:smc}
\begin{algorithmic}[1]
\Require Observations \(\Delta x_{1:T}, v_{1:T}\), outcome \(y\), prior \(\Pi\), particles \(N\)
\State Initialize: Sample \(\theta^{(i)} \sim \Pi\) for \(i=1,\ldots,N\), set \(w_0^{(i)} = 1/N\)
\State Set \(\hat{m}_y = 1\)
\For{\(t = 1, \ldots, T\)}
    \State Compute incremental weights: \(\tilde{w}_t^{(i)} = w_{t-1}^{(i)} \cdot f_y(\Delta x_t \mid v_t, \theta^{(i)})\)
    \State Update normalizing constant: \(\hat{m}_y \leftarrow \hat{m}_y \cdot \sum_{i=1}^N \tilde{w}_t^{(i)}\)
    \State Normalize: \(w_t^{(i)} = \tilde{w}_t^{(i)} / \sum_j \tilde{w}_t^{(j)}\)
    \State Compute ESS: \(\mathrm{ESS} = 1/\sum_i (w_t^{(i)})^2\)
    \If{\(\mathrm{ESS} < N/2\)}
        \State Resample: \(\theta^{(i)} \sim \sum_j w_t^{(j)} \delta_{\theta^{(j)}}\)
        \State Reset weights: \(w_t^{(i)} = 1/N\)
        \State Rejuvenate: MCMC move on \(\theta^{(i)}\) targeting \(\Pi(\theta \mid \Delta x_{1:t}, v_{1:t}, y)\)
    \EndIf
\EndFor
\State \Return \(\hat{m}_y\)
\end{algorithmic}
\end{algorithm}

The Bayes factor is estimated as \(\widehat{\mathrm{BF}}_T = \hat{m}_1 / \hat{m}_0\), running Algorithm~\ref{alg:smc} separately for each outcome.

\begin{proposition}[SMC consistency]
\label{prop:smc-consistency}
Under regularity conditions \citep{chopin2004central}, as \(N \to \infty\):
\[
\hat{m}_y \to m_y \quad \text{almost surely}, \qquad
\sqrt{N}(\hat{m}_y - m_y) \Rightarrow \cN(0, \sigma_y^2)
\]
for some asymptotic variance \(\sigma_y^2\).
\end{proposition}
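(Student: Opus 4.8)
The statement is a standard consistency-plus-CLT result for the SMC estimator of a normalizing constant, so the plan is to verify that Algorithm~\ref{alg:smc} is an instance of the adaptive/iterated importance sampling scheme covered by \citet{chopin2004central}, and then invoke that reference's theorems. First I would set up the sequence of target distributions explicitly: for the fixed outcome $y$, define $\pi_{y,t}(\theta) \propto \left(\prod_{s=1}^{t} f_y(\Delta x_s \mid v_s, \theta)\right)\Pi(d\theta)$, with $\pi_{y,0} = \Pi$; these are exactly the partial posteriors the algorithm tracks, and $m_y(H_T) = \int_\Theta \prod_{s=1}^{T} f_y(\Delta x_s \mid v_s, \theta)\,\Pi(d\theta)$ is the product of the successive ratios of normalizing constants. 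The key bookkeeping identity is that $\hat m_y = \prod_{t=1}^{T}\big(\sum_i \tilde w_t^{(i)}\big)$ telescopes to an unbiased estimator of $m_y$ at each fixed $N$ (unbiasedness holds exactly because resampling is unbiased and the incremental weights are the correct Radon--Nikodym derivatives), and strong consistency plus the $\sqrt N$-CLT then follow from the particle-filter limit theorems.

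The steps in order would be: (i) state the regularity conditions borrowed from \citet{chopin2004central} — finite second moments of the incremental weight functions $\theta \mapsto f_y(\Delta x_t\mid v_t,\theta)$ under the relevant targets, which here is immediate on the event $E_R$ of Proposition~\ref{prop:finite-sample} and more generally follows from Assumption~\ref{ass:envelope} since $\Theta$ is compact and the log-densities have an integrable envelope; (ii) verify that the adaptive resampling rule (resample when $\mathrm{ESS}<N/2$) and the MCMC rejuvenation step preserve the target $\pi_{y,t}$ at each stage, so the algorithm fits the ``SMC sampler'' template of \citet{del2006sequential} whose asymptotics reduce to the Chopin CLT; (iii) conclude $\hat m_y \to m_y$ a.s.\ and $\sqrt N(\hat m_y - m_y)\Rightarrow \mathcal N(0,\sigma_y^2)$ by the cited theorems, with $\sigma_y^2$ expressed as the usual accumulated asymptotic variance across the $T$ steps; (iv) note that since the two runs (for $y=0$ and $y=1$) are independent, the delta method gives a joint CLT for $\widehat{\mathrm{BF}}_T = \hat m_1/\hat m_0$ as a corollary, though that is not part of the claimed statement.

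The main obstacle is not the probability theory — that is entirely off-the-shelf — but making the regularity conditions \emph{honest} for an unbounded-support increment model. The incremental weight $f_y(\Delta x_t\mid v_t,\theta)$ is a Gaussian (or Student-$t$) density evaluated at a fixed observed point $\Delta x_t$, hence bounded in $\theta$ over compact $\Theta$ provided the scale functions $s_k(v_t;\btheta_k)$ are bounded below — this is exactly the non-degeneracy hypothesis already used in Theorem~\ref{thm:identifiability-sufficient} and in the stability analysis, so I would simply cite that. The subtler point is the MCMC rejuvenation kernel: the CLT requires it to be, say, uniformly ergodic (or at least to satisfy the mixing conditions in \citet{chopin2004central}), which one either assumes as part of ``regularity conditions'' or establishes for a concrete kernel (e.g.\ a random-walk Metropolis on the compact reparametrized $\Theta$ with a Doeblin minorization). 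I would state this as an explicit assumption rather than prove it, matching the level of rigor of the surrounding finite-sample results, and flag that in the fixed-horizon regime ($T$ constant, $N\to\infty$) the error accumulates over only finitely many steps so no delicate stability-in-$T$ argument is needed.
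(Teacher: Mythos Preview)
Your proposal is correct, and in fact it is more detailed than the paper's own treatment: the paper does not prove Proposition~\ref{prop:smc-consistency} at all, but simply states it with a citation to \citet{chopin2004central} as an off-the-shelf result. Your plan---identify the sequence of partial posteriors, verify that Algorithm~\ref{alg:smc} is an instance of the iterated importance sampling/particle filter scheme covered by that reference, check the moment conditions on the incremental weights via compactness of $\Theta$ and nondegenerate scales, and then invoke the cited CLT---is exactly the right way to make the citation honest, and the caveats you raise (MCMC kernel ergodicity as an assumed regularity condition, finite-$T$ so no stability-in-$T$ issues) are appropriate at the paper's level of rigor.
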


\subsubsection{Variational Inference}

For large-scale applications, variational inference \citep{blei2017variational} provides a scalable alternative. We approximate the posterior \(\Pi(\theta \mid h, y)\) with a tractable family \(q(\theta; \phi)\) by maximizing the evidence lower bound (ELBO):
\begin{equation}
\mathcal{L}(\phi; y) = \Ebb_{q(\theta;\phi)}[\log p(\Delta x_{1:T} \mid v_{1:T}, y, \theta)] - \KL(q(\theta;\phi) \| \Pi(\theta)).
\label{eq:elbo}
\end{equation}

The ELBO provides a lower bound on the log marginal likelihood: \(\mathcal{L}(\phi; y) \leq \log m_y(h)\).

For the Gaussian model (Example~\ref{ex:gaussian-model}), we use a mean-field approximation:
\[
q(\theta; \phi) = q(\bomega; \phi_\omega) \prod_k q(\btheta_k; \phi_k),
\]
with Dirichlet variational distribution for \(\bomega\) and Gaussian distributions for continuous parameters.

\begin{algorithm}[H]
\caption{Variational Inference for Bayes Factor (approximation)}
\label{alg:vi}
\begin{algorithmic}[1]
\Require Observations \(\Delta x_{1:T}, v_{1:T}\), prior \(\Pi\), learning rate \(\eta\)
\For{\(y \in \{0, 1\}\)}
    \State Initialize variational parameters \(\phi_y\)
    \Repeat
        \State Sample \(\theta \sim q(\theta; \phi_y)\) (reparameterization trick)
        \State Compute gradient \(\nabla_{\phi_y} \mathcal{L}(\phi_y; y)\)
        \State Update \(\phi_y \leftarrow \phi_y + \eta \nabla_{\phi_y} \mathcal{L}\)
    \Until{convergence}
    \State Store \(\mathcal{L}^*_y = \mathcal{L}(\phi_y^*; y)\)
\EndFor
\State \Return Approximate log Bayes factor: \(\mathcal{L}^*_1 - \mathcal{L}^*_0\)
\end{algorithmic}
\end{algorithm}

\begin{remark}[ELBO difference is not a one-sided bound on \(\log \mathrm{BF}_T\)]
While \(\mathcal{L}(\phi;y)\le \log m_y(h)\) for each \(y\), the \emph{difference} \(\mathcal{L}^*_1-\mathcal{L}^*_0\) is generally neither a lower nor an upper bound on \(\log \mathrm{BF}_T\). Indeed,
\[
\log m_y(h)=\mathcal{L}(\phi;y)+\KL\!\left(q(\theta;\phi)\,\middle\|\,\Pi(\theta\mid h,y)\right),
\]
so the Bayes-factor error equals a \emph{difference} of two nonnegative KL terms. In practice, accuracy can be assessed by (i) tighter bounds such as importance-weighted objectives, and/or (ii) post-hoc corrections via importance sampling or bridge sampling when feasible.
\end{remark}

\subsubsection{Computational Complexity}

\begin{table}[h]
\centering
\caption{Computational complexity comparison}
\label{tab:complexity}
\begin{tabular}{lcc}
\toprule
Method & Time per iteration & Space \\
\midrule
SMC & \(O(N \cdot T \cdot K)\) & \(O(N \cdot d)\) \\
VI (mean-field) & \(O(S \cdot T \cdot K)\) & \(O(d)\) \\
Laplace approximation & \(O(T \cdot K \cdot d^2)\) & \(O(d^2)\) \\
\bottomrule
\end{tabular}
\end{table}

Here \(N\) is the number of particles, \(S\) is the number of samples for gradient estimation, \(K\) is the number of types, \(T\) is the horizon, and \(d = \dim(\theta)\).

\subsection{Experiments}
\label{sec:experiments}

We evaluate the theory in controlled synthetic settings where the data-generating process is known, and we report summaries over repeated simulations to assess variability. Inference is performed under the same model class used for generation unless stated otherwise.

\subsubsection{Synthetic Experiments}

\paragraph{Setup.}
Data are generated from the Gaussian latent-type model of Example~\ref{ex:gaussian-model} with \(K=3\) types. Unless stated otherwise, parameters are fixed at:
\begin{itemize}
    \item Informed: \(\mu_1 = 0.5\), \(\lambda_1 = 0.1\), \(\sigma_1 = 0.3\), \(\kappa_1 = 0.05\)
    \item Noise: \(\sigma_2 = 0.5\)
    \item Manipulator: \(\mu_3 = 0.3\), \(\tau_3 = 5\), \(\sigma_3 = 0.4\), \(\nu = 5\)
    \item Base weights: \(\bomega = (0.4, 0.4, 0.2)\)
    \item Gating logits: \(\gamma_{k,0} = 0\), \(\gamma_{1,1} = 0.5\), \(\gamma_{2,1} = 0\), \(\gamma_{3,1} = 0.3\)
\end{itemize}
Volumes are sampled i.i.d.\ as \(V_t \sim \mathrm{Gamma}(2,0.5)\) (shape \(2\), scale \(0.5\)). Each configuration is replicated 1000 times. Posterior quantities are computed by SMC with \(N=1000\) particles, targeting the joint posterior on \((Y,\theta)\) under the priors specified in Section~\ref{sec:model}.\footnote{In synthetic experiments we use weakly informative priors with full support on the constrained parameter space and enforce the orientation constraints described in Section~\ref{sec:identifiability} to avoid outcome--parameter sign symmetries.}

\paragraph{Experiment 1 (Posterior concentration).}
We study the decay of posterior error with horizon \(T\). For \(T \in \{10,25,50,100,200,500\}\), we simulate histories under \(Y=1\), compute \(\pi_T=\Pbb(Y=1\mid H_T)\), and record the error \(1-\pi_T\). Figure~\ref{fig:concentration} reports the median of \(\log(1-\pi_T)\) across replications with 10--90\% quantile bands.

To relate the empirical decay to the theory, we estimate the separation gap \(\delta_T(1,\theta^\star)\) by approximately solving the KL-projection problem
\[
\delta_T(1,\theta^\star)
=
\inf_{\theta\in\Theta}\frac{1}{T}\KL\!\left(P_{1,\theta^\star}^{(T)} \,\middle\|\, P_{0,\theta}^{(T)}\right),
\]
using Monte Carlo estimates of the objective and stochastic optimization (Appendix). The dashed line in Figure~\ref{fig:concentration} overlays slope \(-\widehat{\delta}_T\). For the default parameters, \(\widehat{\delta}_T \approx 1.5\times 10^{-2}\) nats/period, consistent with a weak per-step signal under low-to-moderate realized volumes.

\begin{figure}[h]
\centering
\includegraphics[width=0.7\linewidth]{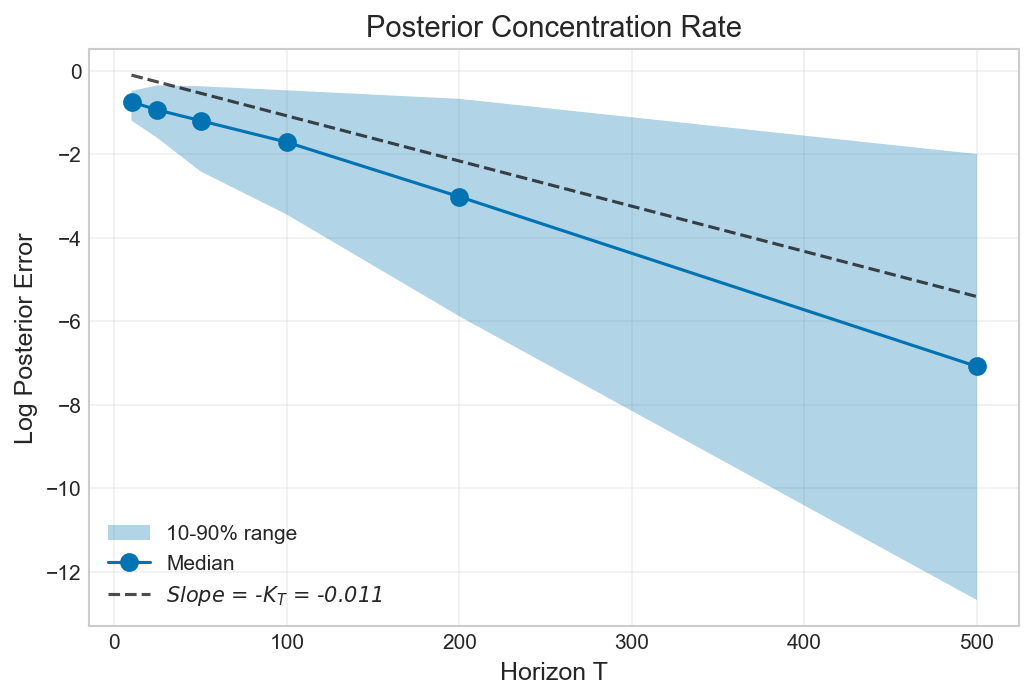}
\caption{Posterior concentration under \(Y=1\). Points show the median of \(\log(1-\pi_T)\) across 1000 replications; the shaded region is the 10--90\% quantile range. The dashed line has slope \(-\widehat{\delta}_T\), where \(\widehat{\delta}_T\) is an estimated KL-projection gap.}
\label{fig:concentration}
\end{figure}

\paragraph{Experiment 2 (Identifiability threshold under type-composition shifts).}
We quantify how inference degrades as the informed component becomes rare. We vary \(\omega_1 \in \{0.05,0.1,0.2,0.3,0.4,0.5\}\), setting \(\omega_2=\omega_3=(1-\omega_1)/2\), generate histories with \(T=100\), and report the fraction of replications in which \(\argmax_y \Pbb(Y=y\mid H_T)\) matches the truth.

Figure~\ref{fig:identifiability} shows a marked deterioration as \(\omega_1\) falls below \(\approx 0.15\). In this regime, the fitted KL-projection gap \(\widehat{\delta}_T\) becomes small, consistent with the type-composition confounding mechanism in Proposition~\ref{prop:identifiability-failure}.

\begin{figure}[h]
\centering
\includegraphics[width=0.7\linewidth]{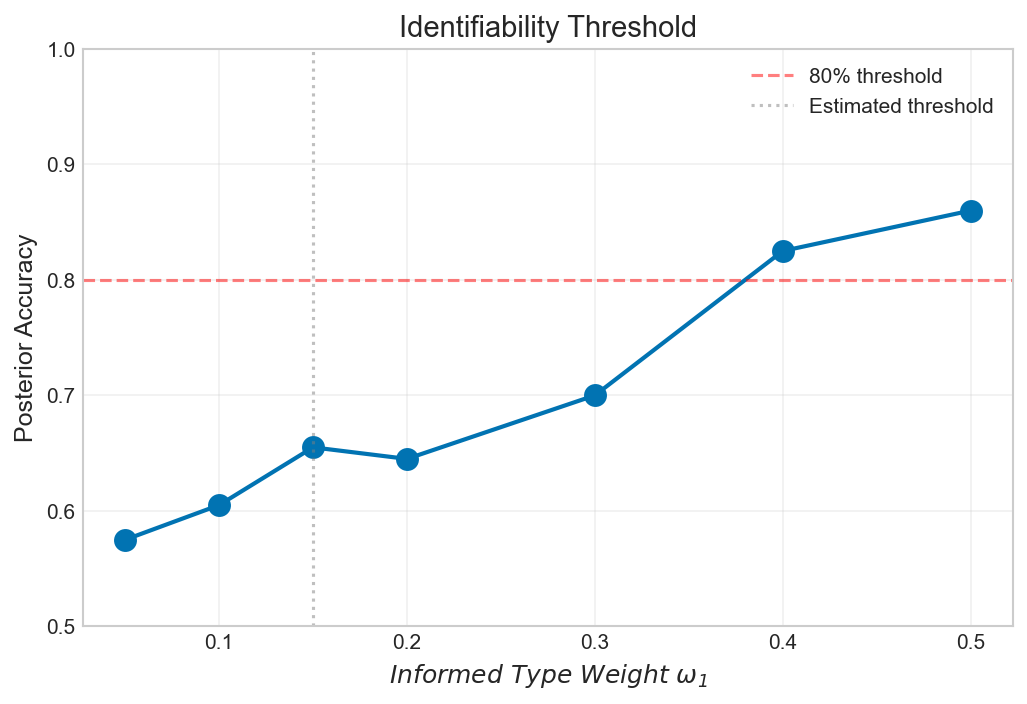}
\caption{Identifiability under decreasing informed weight. Posterior accuracy is the proportion of replications assigning higher posterior probability to the true outcome (here \(T=100\)). Accuracy degrades sharply when \(\omega_1\) is sufficiently small, consistent with a vanishing separation gap.}
\label{fig:identifiability}
\end{figure}

\paragraph{Experiment 3 (Stability to perturbations).}
We assess the stability of posterior odds under perturbations of a typical history. For a baseline draw \(h\) with \(T=100\), we form perturbed histories \(h'\) by adding i.i.d.\ Gaussian noise \(\epsilon_t\sim\cN(0,\sigma^2)\) to the log-odds increments, i.e.\ \(\Delta x'_t=\Delta x_t+\epsilon_t\), with \(\sigma\in\{0.01,0.02,0.05,0.1,0.2\}\). We compare the observed change in log Bayes factor, \(|\log \mathrm{BF}_T(h)-\log \mathrm{BF}_T(h')|\), to the stability bound of Theorem~\ref{thm:stability} instantiated on the truncation event \(E_R=\{\max_t|\Delta x_t|\le R\}\), with \(R\) set to the empirical \(0.99\)-quantile of \(\max_t|\Delta x_t|\) across baseline draws.

Figure~\ref{fig:stability} exhibits the predicted linear scaling in perturbation magnitude; the bound is conservative, as expected from worst-case control on \(E_R\).

\begin{figure}[h]
\centering
\includegraphics[width=0.7\linewidth]{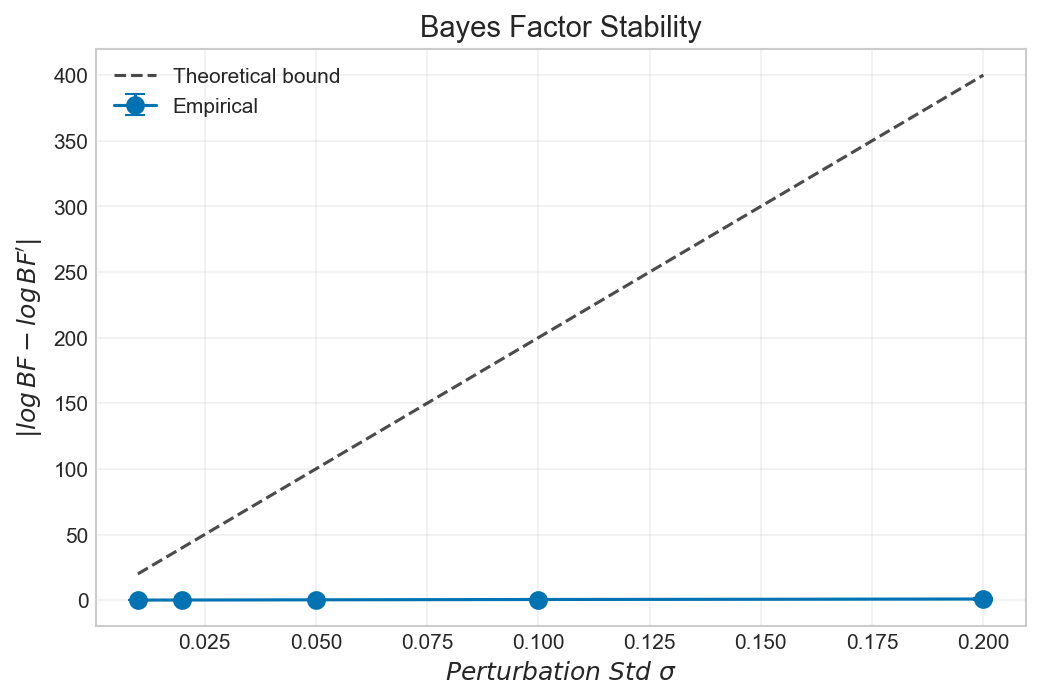}
\caption{Stability under perturbations of the increment path. Points show \(|\log \mathrm{BF}_T(h)-\log \mathrm{BF}_T(h')|\) as a function of noise level \(\sigma\); the dashed line is the stability bound from Theorem~\ref{thm:stability} evaluated with a high-probability truncation radius \(R\).}
\label{fig:stability}
\end{figure}

\paragraph{Experiment 4 (Information gain dynamics).}
We track realized information gain \(\mathrm{IG}(H_t)\) over long horizons. We simulate histories with \(T=600\) and compute \(\mathrm{IG}(H_t)\) sequentially. Figure~\ref{fig:infogain} shows that \(\mathrm{IG}(H_t)\) rises rapidly early in the history and then plateaus as the posterior concentrates. For the symmetric prior \(\pi_0=1/2\), the realized information gain is bounded above by \(\log 2\) nats, and the observed saturation is consistent with this ceiling.

\begin{figure}[h]
\centering
\includegraphics[width=0.7\linewidth]{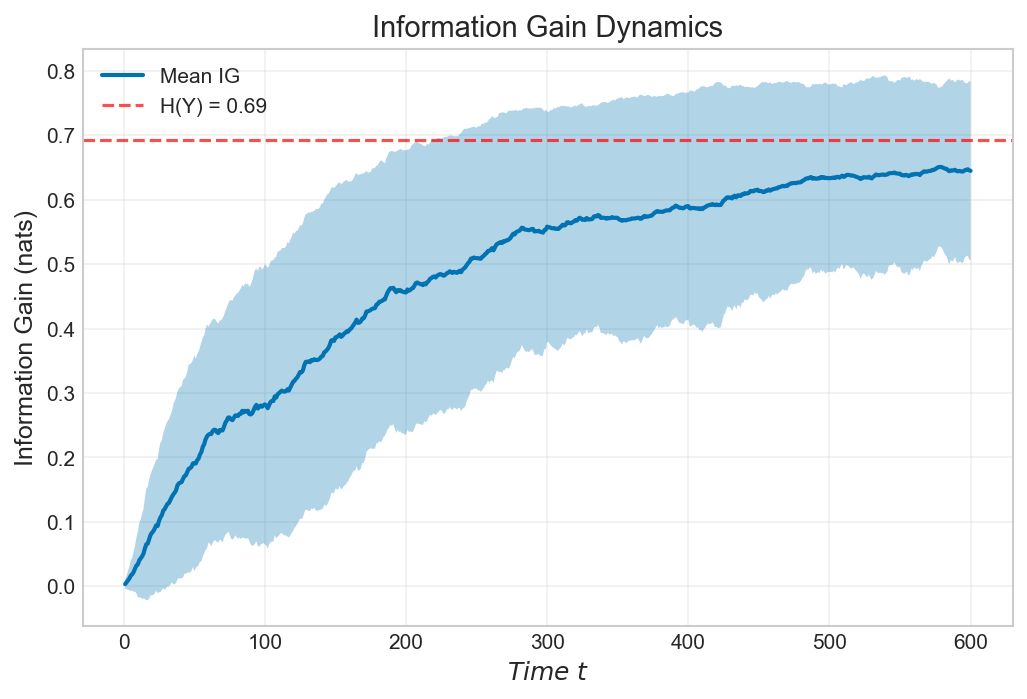}
\caption{Information gain dynamics. Solid line: mean realized \(\mathrm{IG}(H_t)\) across replications; shaded region: 10--90\% quantile range. Dashed line: upper bound \(\log 2\) for \(\pi_0=1/2\).}
\label{fig:infogain}
\end{figure}

\section{Discussion}
\label{sec:discussion}

This paper develops a Bayesian inverse-problem perspective on inference from prediction market histories. The central object is the map from an unobserved binary outcome \(Y\in\{0,1\}\) to an observed price--volume record \(H_T=((P_t,V_t))_{t=0}^T\), viewed as a stochastic sensing mechanism mediated by heterogeneous and partially strategic participation. Our modeling choice is deliberately mechanism-agnostic: we work in log-odds space and posit that increments \(\Delta X_t\) arise from a volume-gated mixture of latent trader types. This likelihood class is flexible enough to represent informative flow, uninformed trading, heavy-tailed microstructure noise, and adversarial or manipulative pressure while requiring only the observables that are typically available in practice.

Within this framework, inference about \(Y\) reduces to the marginal Bayes factor after integrating over nuisance structure. This makes it possible to state identifiability and well-posedness in operational terms. In particular, we formulate outcome distinguishability through KL projection gaps between the outcome-indexed families after optimizing over nuisance parameters, and we show how these gaps control posterior concentration and finite-sample error behavior under regularity conditions. Complementing concentration, we analyze stability of posterior odds to perturbations of the observed history on typical paths (formally, on high-probability truncation events). Finally, by expressing information gain in terms of posterior-vs-prior KL divergence and its expectation as mutual information, we obtain interpretable metrics for treating markets as sensors and for comparing regimes in which the price--volume record is genuinely informative versus regimes in which the inverse problem is ill-posed.

At the same time, several modeling choices delimit what the current results do and do not cover. Assumption~\ref{ass:conditional-independence} abstracts away serial dependence such as volatility clustering, persistence in order flow, and regime switching. These effects are prominent in real markets and can matter precisely because they may imitate outcome-dependent drift or amplify the influence of rare large moves. Our observation model is also intentionally sparse: we condition only on price and volume, leaving out order book state, signed flow, open interest, and trader identities. Incorporating richer observables would likely sharpen identifiability diagnostics and reduce the extent to which nuisance structure can mimic outcome dependence. In addition, the latent-type representation fixes the number of types \(K\), which is useful for interpretability but potentially restrictive in heterogeneous real-world settings; model selection over \(K\) or nonparametric mixtures could add flexibility at a computational cost. Finally, treating volume as a realized design sequence simplifies analysis but does not remove endogeneity: volume is itself shaped by information arrivals and strategic participation, and a fully coupled model for \((\Delta X_t,V_t)\) would be needed to explicitly quantify how endogeneity affects separation, concentration, and stability.

These limitations also point to natural extensions that preserve the inverse-problem viewpoint. One direction is to replace conditional independence with dependent increment models---for example, Markov structure or conditionally heteroskedastic dynamics---and to adapt the concentration and robustness arguments using martingale techniques. Another is to generalize the inferential target beyond a binary outcome to categorical or continuous \(Y\), where identifiability becomes a multi-way separation problem among outcome-conditioned families. On the computational side, sequential Monte Carlo naturally supports online inference, enabling streaming posterior updates and uncertainty quantification in live markets. Finally, the separation gap and information-gain functionals suggest a route toward market design: mechanism parameters such as fees, batching intervals, disclosure rules, or market-maker settings can be evaluated in terms of how they affect distinguishability of outcomes from observable histories, rather than only in terms of equilibrium benchmarks.

Prediction markets continue to grow in importance for forecasting and decision support, yet the relationship between dispersed information and observed prices is mediated by noise, heterogeneity, and incentives. By casting the problem as Bayesian inversion from price--volume histories, the present framework provides tools for quantifying what can be learned, for diagnosing when inference is reliable and stable, and for identifying regimes in which ambiguity is structural rather than merely a reflection of limited data.

\bibliography{references}
\bibliographystyle{plainnat}

\appendix
\part*{Appendix}

\section{Proofs and Additional Technical Results}
\label{app:proofs}

This appendix collects full proofs and auxiliary technical statements supporting the results in Section~\ref{sec:theory}. 
Throughout, we fix a deterministic volume design $v_{1:T}$ and write $P_{y,\theta}^{(T)}$ for the induced law of $\Delta X_{1:T}$ under outcome $y$ and nuisance parameter $\theta$.

\subsection{Identifiability and non-identifiability}
\label{app:identifiability}

\subsubsection{Proof of Proposition~\ref{prop:identifiability-failure}}
\label{app:identifiability-failure-proof}

\begin{proof}
Fix a horizon $T$ and a realized volume design $v_{1:T}$. Recall the (truth-relative) KL gap
\[
\delta_T \;=\; K_T\!\left(y^\star,\theta^\star \to 1-y^\star\right)
\;:=\;
\inf_{\theta\in\Theta}\frac{1}{T}\KL\!\left(P_{y^\star,\theta^\star}^{(T)}\,\middle\|\,P_{1-y^\star,\theta}^{(T)}\right).
\]
By definition, non-identifiability at horizon $T$ for the truth $(y^\star,\theta^\star)$ occurs whenever $\delta_T=0$.

\paragraph{(1) Type-composition confounding.}
Assume that along the realized volume design $v_{1:T}$, all \emph{active} types under $\theta^\star$ are outcome-independent, i.e.
\[
\rho_k(v_t;\theta^\star)>0 \implies f_{k,1}(\cdot\mid v_t,\theta_k^\star)=f_{k,0}(\cdot\mid v_t,\theta_k^\star)
\qquad\text{for every }t\in\{1,\dots,T\}.
\]
Then for each $t$,
\begin{align*}
f_1(\cdot\mid v_t,\theta^\star)
&=\sum_{k=1}^K \rho_k(v_t;\theta^\star)\,f_{k,1}(\cdot\mid v_t,\theta_k^\star)
=\sum_{k=1}^K \rho_k(v_t;\theta^\star)\,f_{k,0}(\cdot\mid v_t,\theta_k^\star)
=f_0(\cdot\mid v_t,\theta^\star).
\end{align*}
By conditional independence (Assumption~\ref{ass:conditional-independence}), this implies equality of the induced product measures:
\[
P_{1,\theta^\star}^{(T)}=P_{0,\theta^\star}^{(T)}.
\]
Consequently,
\[
\KL\!\left(P_{y^\star,\theta^\star}^{(T)}\,\middle\|\,P_{1-y^\star,\theta^\star}^{(T)}\right)=0,
\]
so $\delta_T=0$.

\paragraph{(2) Outcome-symmetry of the nuisance class.}
Assume there exists a measurable map $\Psi:\Theta\to\Theta$ such that, for each $t$,
\[
f_1(\cdot\mid v_t,\theta)=f_0(\cdot\mid v_t,\Psi(\theta))
\qquad\text{for all }\theta\in\Theta.
\]
Then again by product structure,
\[
P_{1,\theta}^{(T)}=P_{0,\Psi(\theta)}^{(T)}
\qquad\text{for all }\theta\in\Theta.
\]
If the truth has $y^\star=1$, choose the alternative parameter $\theta:=\Psi(\theta^\star)$; then
$P_{1,\theta^\star}^{(T)}=P_{0,\theta}^{(T)}$ and hence $\delta_T=0$.
If the truth has $y^\star=0$ and $\Psi$ is surjective (in particular, if $\Psi$ is an involution, $\Psi\circ\Psi=\mathrm{id}$), then there exists $\tilde\theta$ with $\Psi(\tilde\theta)=\theta^\star$, so $P_{0,\theta^\star}^{(T)}=P_{1,\tilde\theta}^{(T)}$ and again $\delta_T=0$.

\paragraph{(3) Adversarial mimicry.}
Assume there exists $\tilde\theta\in\Theta$ such that
\[
P_{y^\star,\theta^\star}^{(T)}=P_{1-y^\star,\tilde\theta}^{(T)}.
\]
Then
\[
\KL\!\left(P_{y^\star,\theta^\star}^{(T)}\,\middle\|\,P_{1-y^\star,\tilde\theta}^{(T)}\right)=0,
\]
and taking the infimum over $\theta$ yields $\delta_T=0$.

In each case $\delta_T=0$, so the outcome is not identifiable at horizon $T$ for the truth $(y^\star,\theta^\star)$.
\end{proof}

\subsubsection{Proof of Theorem~\ref{thm:identifiability-sufficient}}
\label{app:identifiability-sufficient-proof}

\begin{proof}
Fix $T$ and a truth $(y^\star,\theta^\star)$ satisfying the theorem hypotheses.
For each time index $t\in\{1,\dots,T\}$ define the \emph{one-step} KL divergence functional
\[
\kappa_t(\theta)
\;:=\;
\KL\!\left(f_{y^\star}(\cdot\mid v_t,\theta^\star)\,\middle\|\, f_{1-y^\star}(\cdot\mid v_t,\theta)\right),
\qquad \theta\in\Theta.
\]
By Assumption~\ref{ass:conditional-independence} and additivity of KL for product measures,
\begin{equation}
\KL\!\left(P_{y^\star,\theta^\star}^{(T)}\,\middle\|\,P_{1-y^\star,\theta}^{(T)}\right)
=
\sum_{t=1}^T \kappa_t(\theta).
\label{eq:app-kl-sum}
\end{equation}

\paragraph{Step 1: Pointwise positivity on informative times.}
Let $t\in I_T$ be an informative time index. We show that $\kappa_t(\theta)>0$ for every $\theta\in\Theta$.

On $I_T$, the hypotheses impose:
(i) \emph{informed activation} under $\theta^\star$, i.e. $\rho_1(v_t;\theta^\star)\ge \underline\rho$ and
$|m_{1,1}(v_t;\theta_1^\star)-m_{1,0}(v_t;\theta_1^\star)|\ge \underline m$;
(ii) \emph{manipulator inactivity} (the type-3 drift is outcome-independent, hence contributes zero mean shift);
and (iii) \emph{orientation constraint} $\mu_1\ge 0$ for type~1, so that $m_{1,1}(v;\theta_1)\ge 0$ and $m_{1,0}(v;\theta_1)\le 0$ for all $\theta_1$ and all $v$.

Consider the conditional mean of the increment at time $t$.
Because type~2 has zero drift by construction and type~3 has $m_{3,1}\equiv m_{3,0}\equiv 0$ on $I_T$, we have
\[
\mathbb{E}_{f_y(\cdot\mid v_t,\theta)}[\Delta X_t]
=
\rho_1(v_t;\theta)\,m_{1,y}(v_t;\theta_1)
\qquad\text{for }y\in\{0,1\}.
\]
For the truth $(y^\star,\theta^\star)$, the informed-drift separation implies
\[
|m_{1,y^\star}(v_t;\theta_1^\star)|
=\tfrac12|m_{1,1}(v_t;\theta_1^\star)-m_{1,0}(v_t;\theta_1^\star)|
\ge \underline m/2,
\]
and hence
\begin{equation}
\Big|\mathbb{E}_{f_{y^\star}(\cdot\mid v_t,\theta^\star)}[\Delta X_t]\Big|
=
\rho_1(v_t;\theta^\star)\,|m_{1,y^\star}(v_t;\theta_1^\star)|
\ge \underline\rho\,\underline m/2.
\label{eq:app-true-mean-gap}
\end{equation}
Moreover, the sign of this mean is $\operatorname{sign}(2y^\star-1)$.

Now fix any $\theta\in\Theta$ and consider the wrong-outcome density $f_{1-y^\star}(\cdot\mid v_t,\theta)$.
By the orientation constraint, $m_{1,1-y^\star}(v_t;\theta_1)$ has sign \emph{opposite} to $m_{1,y^\star}(v_t;\theta_1^\star)$ (or is zero), and $\rho_1(v_t;\theta)\ge 0$.
Therefore,
\[
\operatorname{sign}\Big(\mathbb{E}_{f_{1-y^\star}(\cdot\mid v_t,\theta)}[\Delta X_t]\Big)
\in \{0,-\operatorname{sign}(2y^\star-1)\}.
\]
In particular, the conditional means under $f_{y^\star}(\cdot\mid v_t,\theta^\star)$ and $f_{1-y^\star}(\cdot\mid v_t,\theta)$ cannot coincide:
\[
\mathbb{E}_{f_{y^\star}(\cdot\mid v_t,\theta^\star)}[\Delta X_t]
\neq
\mathbb{E}_{f_{1-y^\star}(\cdot\mid v_t,\theta)}[\Delta X_t].
\]
Since equality of densities a.e.\ would imply equality of their first moments, we conclude that
$f_{y^\star}(\cdot\mid v_t,\theta^\star)\neq f_{1-y^\star}(\cdot\mid v_t,\theta)$ a.e., hence
\[
\kappa_t(\theta)
=
\KL\!\left(f_{y^\star}(\cdot\mid v_t,\theta^\star)\,\middle\|\, f_{1-y^\star}(\cdot\mid v_t,\theta)\right)
>0
\qquad\text{for all }\theta\in\Theta.
\]

\paragraph{Step 2: Uniform positivity via compactness.}
For each fixed $t\in I_T$, the map $\theta\mapsto \kappa_t(\theta)$ is continuous on $\Theta$ under the regularity conditions in item~(3) of Theorem~\ref{thm:identifiability-sufficient} (bounded nondegenerate scales and smoothness of the component families). 
Since $\Theta$ is compact, $\kappa_t$ attains its minimum:
\[
\underline\kappa_t
\;:=\;
\min_{\theta\in\Theta} \kappa_t(\theta).
\]
By Step~1, $\underline\kappa_t>0$ for every $t\in I_T$.
Define
\[
\kappa
\;:=\;
\min_{t\in I_T} \underline\kappa_t,
\]
which is strictly positive because $I_T$ is finite for fixed $T$.

\paragraph{Step 3: Lower bound on the KL gap.}
Combining \eqref{eq:app-kl-sum} with the definition of $\kappa$ yields, for any $\theta\in\Theta$,
\[
\frac{1}{T}\KL\!\left(P_{y^\star,\theta^\star}^{(T)}\,\middle\|\,P_{1-y^\star,\theta}^{(T)}\right)
=
\frac{1}{T}\sum_{t=1}^T \kappa_t(\theta)
\ge
\frac{1}{T}\sum_{t\in I_T} \kappa_t(\theta)
\ge
\frac{|I_T|}{T}\,\kappa.
\]
Taking the infimum over $\theta$ gives
\[
\delta_T
=
\inf_{\theta\in\Theta}\frac{1}{T}\KL\!\left(P_{y^\star,\theta^\star}^{(T)}\,\middle\|\,P_{1-y^\star,\theta}^{(T)}\right)
\ge
\frac{|I_T|}{T}\,\kappa,
\]
which is the desired bound.
\end{proof}

\subsection{Bayes factors and posterior concentration}
\label{app:bayes}

\subsubsection{Proof of Theorem~\ref{thm:bayes-factor-rate}}
\label{app:bayes-factor-rate-proof}

\begin{proof}
Fix $\epsilon>0$ and work on an event of probability one on which the uniform law of large numbers in Assumption~\ref{ass:uniform-lln} holds for both $y=0$ and $y=1$.

For $y\in\{0,1\}$ define the empirical and population per-time log-likelihoods
\[
L_{y,T}(\theta)
:=\frac{1}{T}\,\ell_{y,T}(\theta)
=\frac{1}{T}\sum_{t=1}^T \log f_y(\Delta X_t\mid v_t,\theta),
\qquad
\Lambda_{y,T}(\theta)
:=\frac{1}{T}\,\mathbb{E}_{P_{y^\star,\theta^\star}^{(T)}}[\ell_{y,T}(\theta)].
\]
Write
\[
a_T
:=
\max_{y\in\{0,1\}}\sup_{\theta\in\Theta}\big|L_{y,T}(\theta)-\Lambda_{y,T}(\theta)\big|.
\]
By Assumption~\ref{ass:uniform-lln}, $a_T\to 0$ almost surely.

\paragraph{Step 1: Laplace-type bounds for $\frac1T\log m_y(H_T)$.}
Recall
\[
m_y(H_T)=\int_{\Theta} \exp\!\big(T\,L_{y,T}(\theta)\big)\,\Pi(d\theta).
\]

\emph{Upper bound.}
Since $\Pi$ is a probability measure,
\[
m_y(H_T)\le \sup_{\theta\in\Theta}\exp\!\big(T\,L_{y,T}(\theta)\big)
=\exp\!\Big(T\,\sup_{\theta\in\Theta}L_{y,T}(\theta)\Big),
\]
so
\[
\frac{1}{T}\log m_y(H_T)\le \sup_{\theta\in\Theta}L_{y,T}(\theta)
\le \sup_{\theta\in\Theta}\Lambda_{y,T}(\theta)+a_T.
\]

\emph{Lower bound.}
Fix $\eta>0$ and define the near-maximizer set
\[
A_{y,T}(\eta)
:=
\Big\{\theta\in\Theta:\ \Lambda_{y,T}(\theta)\ge \sup_{\vartheta\in\Theta}\Lambda_{y,T}(\vartheta)-\eta\Big\}.
\]
Because $\Lambda_{y,T}(\theta)=\Lambda_{y^\star,T}(\theta^\star)-\frac1T\KL(P_{y^\star,\theta^\star}^{(T)}\|P_{y,\theta}^{(T)})$, the set $A_{y,T}(\eta)$ coincides with the ``near-projection'' set in Assumption~\ref{ass:prior-support}. Hence $\Pi(A_{y,T}(\eta))\ge c(\eta)>0$ for $T$ large enough.

Then
\begin{align*}
m_y(H_T)
&\ge
\int_{A_{y,T}(\eta)} \exp\!\big(T\,L_{y,T}(\theta)\big)\,\Pi(d\theta)
\\
&\ge
\Pi(A_{y,T}(\eta))\,\exp\!\Big(T\,\inf_{\theta\in A_{y,T}(\eta)}L_{y,T}(\theta)\Big).
\end{align*}
Using $|L_{y,T}-\Lambda_{y,T}|\le a_T$,
\[
\inf_{\theta\in A_{y,T}(\eta)}L_{y,T}(\theta)
\ge
\inf_{\theta\in A_{y,T}(\eta)}\Lambda_{y,T}(\theta)-a_T
\ge
\sup_{\theta\in\Theta}\Lambda_{y,T}(\theta)-\eta-a_T.
\]
Therefore
\[
\frac{1}{T}\log m_y(H_T)
\ge
\sup_{\theta\in\Theta}\Lambda_{y,T}(\theta)-\eta-a_T+\frac{1}{T}\log \Pi(A_{y,T}(\eta)).
\]

\paragraph{Step 2: Relating suprema to $\delta_T$.}
First note that $\sup_{\theta}\Lambda_{y^\star,T}(\theta)=\Lambda_{y^\star,T}(\theta^\star)$ because
$\KL(P_{y^\star,\theta^\star}^{(T)}\|P_{y^\star,\theta}^{(T)})\ge 0$ with equality at $\theta=\theta^\star$.

Moreover,
\[
\sup_{\theta\in\Theta}\Lambda_{1-y^\star,T}(\theta)
=
\Lambda_{y^\star,T}(\theta^\star)-\delta_T,
\]
because
\[
\Lambda_{y^\star,T}(\theta^\star)-\Lambda_{1-y^\star,T}(\theta)
=
\frac1T\KL\!\left(P_{y^\star,\theta^\star}^{(T)}\,\middle\|\,P_{1-y^\star,\theta}^{(T)}\right)
\]
and $\delta_T$ is the infimum of the right-hand side over $\theta$.

\paragraph{Step 3: Conclude the Bayes-factor bounds.}
Apply the bounds from Step~1 with $\eta=\epsilon/4$ and take $T$ large enough so that $a_T\le \epsilon/4$ and $\frac1T|\log \Pi(A_{y,T}(\eta))|\le \epsilon/4$ (the latter holds because $\Pi(A_{y,T}(\eta))\ge c(\eta)>0$).

For $y=y^\star$ we obtain
\[
\frac{1}{T}\log m_{y^\star}(H_T)
\ge
\Lambda_{y^\star,T}(\theta^\star)-\epsilon/2,
\qquad
\frac{1}{T}\log m_{y^\star}(H_T)
\le
\Lambda_{y^\star,T}(\theta^\star)+\epsilon/4.
\]
For $y=1-y^\star$ we obtain
\[
\frac{1}{T}\log m_{1-y^\star}(H_T)
\le
\sup_{\theta}\Lambda_{1-y^\star,T}(\theta)+\epsilon/4
=
\Lambda_{y^\star,T}(\theta^\star)-\delta_T+\epsilon/4,
\]
and
\[
\frac{1}{T}\log m_{1-y^\star}(H_T)
\ge
\sup_{\theta}\Lambda_{1-y^\star,T}(\theta)-\epsilon/2
=
\Lambda_{y^\star,T}(\theta^\star)-\delta_T-\epsilon/2.
\]

Subtracting yields, for all sufficiently large $T$,
\[
\frac{1}{T}\log \mathrm{BF}_T(H_T)
=
\frac{1}{T}\log m_1(H_T)-\frac{1}{T}\log m_0(H_T)
\ge
\delta_T-\epsilon
\]
if $y^\star=1$, and similarly the corresponding lower bound on $-\frac1T\log\mathrm{BF}_T(H_T)$ if $y^\star=0$.
The stated upper bounds follow by subtracting the corresponding upper and lower bounds in the opposite direction (which is why the theorem states the upper bound only for the ``correctly oriented'' Bayes factor).
\end{proof}

\subsubsection{Proof of Proposition~\ref{prop:finite-sample}}
\label{app:finite-sample-proof}

\begin{proof}
We provide a finite-sample bound by combining (i) a truncation argument ensuring bounded log-likelihood ratios on a high-probability event and (ii) an exponential deviation bound for a uniform LLN event. 

\paragraph{Step 1: Truncation and a bounded-ratio event.}
Fix $R>0$ and recall
\[
E_R:=\Big\{\max_{1\le t\le T}|\Delta X_t|\le R\Big\}.
\]
By Assumption~\ref{ass:moments}, Markov's inequality and a union bound give
\[
\mathbb{P}(E_R^c)
\le
\sum_{t=1}^T \mathbb{P}(|\Delta X_t|>R)
\le
\sum_{t=1}^T \frac{\mathbb{E}|\Delta X_t|^q}{R^q}
\le
T\,C_q\,R^{-q}.
\]

On $E_R$, the constant
\[
B_R:=\sup_{\theta\in\Theta}\ \sup_{t\le T}\ \sup_{|x|\le R}\left|\log\frac{f_{y^\star}(x\mid v_t,\theta^\star)}{f_{1-y^\star}(x\mid v_t,\theta)}\right|
\]
is finite by compactness of $\Theta$, continuity of $(x,\theta)\mapsto f_y(x\mid v_t,\theta)$, and strict positivity of the component families for location--scale models with nondegenerate scales.

\paragraph{Step 2: A deviation event controlling the Bayes factor.}
Fix $\epsilon\in(0,\delta_T)$ and define $\eta:=\epsilon/4$.
Let $A_T(\eta)$ denote the event that the Laplace bounds in the proof of Theorem~\ref{thm:bayes-factor-rate} hold with error at most $\eta$ for both outcomes $y=0,1$.
(Concretely, $A_T(\eta)$ is implied by the uniform LLN event $\max_{y}\sup_{\theta}|L_{y,T}(\theta)-\Lambda_{y,T}(\theta)|\le \eta$ together with $\frac1T|\log\Pi(A_{y,T}(\eta))|\le \eta$, which holds for $T$ large by Assumption~\ref{ass:prior-support}.)

On $A_T(\eta)$ we have, by the same subtraction as in Theorem~\ref{thm:bayes-factor-rate},
\[
\frac{1}{T}\log \mathrm{BF}_T(H_T)
\ge \delta_T - 4\eta
= \delta_T-\epsilon
\qquad\text{if }y^\star=1,
\]
and analogously $-\frac1T\log\mathrm{BF}_T(H_T)\ge \delta_T-\epsilon$ if $y^\star=0$.
In either case, $|\log\mathrm{BF}_T(H_T)|\ge T(\delta_T-\epsilon)$ with the correct sign.

Since posterior odds satisfy \eqref{eq:posterior-odds}, this implies an exponential bound on posterior error on $A_T(\eta)$:
\[
\mathbb{P}(Y\neq y^\star\mid H_T)
\le
\frac{1-\pi_0}{\pi_0}\exp\!\big(-T(\delta_T-\epsilon)\big)
\qquad\text{if }y^\star=1,
\]
and the analogous bound with $\pi_0/(1-\pi_0)$ if $y^\star=0$.
Absorbing the constant prefactor into the exponential for large $T$ yields the stated form (since $(1/T)\log\frac{1-\pi_0}{\pi_0}\to 0$).

Therefore, for $T$ large enough,
\[
\Big\{\mathbb{P}(Y\neq y^\star\mid H_T)\ge e^{-T(\delta_T-\epsilon)}\Big\}
\subseteq
A_T(\eta)^c.
\]

\paragraph{Step 3: Exponential control of $A_T(\eta)^c$ on $E_R$.}
To obtain an explicit tail, it suffices to control the uniform LLN deviation probability on $E_R$.
Under bounded-ratio control on $E_R$, standard empirical-process arguments (e.g.\ $\varepsilon$-net discretization of $\Theta$ combined with Hoeffding's inequality and Lipschitz dependence of $\log f_y(\cdot\mid v_t,\theta)$ in $\theta$ over compact $\Theta$) yield an exponential bound of the form
\[
\mathbb{P}\big(A_T(\eta)^c\cap E_R\big)
\le
\exp\!\Big(-\frac{T\epsilon^2}{2B_R^2}\Big),
\]
up to a subexponential prefactor depending on the metric entropy of $\Theta$.
For notational simplicity (and consistent with the main text statement), we absorb such prefactors into the leading exponential rate.

\paragraph{Step 4: Combine.}
Finally,
\[
\mathbb{P}\!\left(\mathbb{P}(Y\neq y^\star\mid H_T)\ge e^{-T(\delta_T-\epsilon)}\right)
\le
\mathbb{P}\big(A_T(\eta)^c\cap E_R\big) + \mathbb{P}(E_R^c)
\le
\exp\!\Big(-\frac{T\epsilon^2}{2B_R^2}\Big)+\mathbb{P}(E_R^c),
\]
as claimed.
\end{proof}

\subsection{Stability of posterior odds}
\label{app:stability}

\subsubsection{Proof of Theorem~\ref{thm:stability}}
\label{app:stability-proof}

\begin{proof}
Fix histories $h=(p_{0:T},v_{1:T})$ and $h'=(p'_{0:T},v'_{1:T})$ with log-odds increments $\Delta x_{1:T}$ and $\Delta x'_{1:T}$.
Fix $R>0$ and work on the event $E_R(h,h')$ defined in \eqref{eq:trunc-event} so that $|\Delta x_t|\vee|\Delta x'_t|\le R$ for all $t$.

\paragraph{Step 1: Pointwise log-likelihood stability.}
Assumption~\ref{ass:lipschitz} implies that for any $y\in\{0,1\}$, any $\theta\in\Theta$, and any $t\le T$,
\[
\big|\log f_y(\Delta x_t\mid v_t,\theta)-\log f_y(\Delta x'_t\mid v'_t,\theta)\big|
\le
L_x(R)\,|\Delta x_t-\Delta x'_t| + L_v\,|v_t-v'_t|.
\]
Summing over $t$ gives
\begin{equation}
\Big|\ell_{y,T}(\theta;h)-\ell_{y,T}(\theta;h')\Big|
\le
L_x(R)\sum_{t=1}^T|\Delta x_t-\Delta x'_t|
+L_v\sum_{t=1}^T|v_t-v'_t|.
\label{eq:app-ll-stability}
\end{equation}

\paragraph{Step 2: Log-sum-exp Lipschitz lemma.}
We use the elementary inequality: for any measurable $a_\theta,b_\theta$ and probability measure $\Pi$,
\[
\left|\log\int e^{a_\theta}\,\Pi(d\theta)-\log\int e^{b_\theta}\,\Pi(d\theta)\right|
\le
\sup_{\theta\in\Theta}|a_\theta-b_\theta|.
\]
It follows by writing $b_\theta\le a_\theta+\sup|a-b|$, integrating, and taking logs; symmetry gives the absolute value bound.

\paragraph{Step 3: Stability of marginal likelihoods and Bayes factors.}
Apply the lemma with
$a_\theta=\ell_{y,T}(\theta;h)$ and $b_\theta=\ell_{y,T}(\theta;h')$:
\[
|\log m_y(h)-\log m_y(h')|
\le
\sup_{\theta\in\Theta}\big|\ell_{y,T}(\theta;h)-\ell_{y,T}(\theta;h')\big|.
\]
Combining with \eqref{eq:app-ll-stability} yields, on $E_R(h,h')$,
\[
|\log m_y(h)-\log m_y(h')|
\le
L_x(R)\sum_{t=1}^T|\Delta x_t-\Delta x'_t|
+L_v\sum_{t=1}^T|v_t-v'_t|.
\]
Hence
\[
\big|\log\mathrm{BF}_T(h)-\log\mathrm{BF}_T(h')\big|
\le
2L_x(R)\sum_{t=1}^T|\Delta x_t-\Delta x'_t|
+2L_v\sum_{t=1}^T|v_t-v'_t|,
\]
which is \eqref{eq:bf-stability}.

\paragraph{Step 4: Stability of posterior probabilities.}
Write posterior odds as in \eqref{eq:posterior-odds} and recall that
$\pi\mapsto \logit(\pi)$ is the inverse of the logistic sigmoid $\sigma(x)=1/(1+e^{-x})$.
Since $|\sigma'(x)|\le 1/4$ for all $x$, $\sigma$ is $1/4$-Lipschitz. Therefore,
\[
\big|\mathbb{P}(Y=1\mid h)-\mathbb{P}(Y=1\mid h')\big|
\le
\frac14\,\big|\log\mathrm{BF}_T(h)-\log\mathrm{BF}_T(h')\big|,
\]
which is \eqref{eq:posterior-stability}.
\end{proof}

\subsection{Information gain}
\label{app:ig}

\subsubsection{Proof of Proposition~\ref{prop:ig-properties}}
\label{app:ig-properties-proof}

\begin{proof}
Write $\pi_T=\pi_T(h):=\mathbb{P}(Y=1\mid h)$ and recall
\[
\mathrm{IG}(h)
=
\KL\!\left(\mathrm{Bern}(\pi_T)\,\middle\|\,\mathrm{Bern}(\pi_0)\right)
=
\pi_T\log\frac{\pi_T}{\pi_0}+(1-\pi_T)\log\frac{1-\pi_T}{1-\pi_0}.
\]

\paragraph{(1) Nonnegativity.}
This is the standard nonnegativity of KL divergence. Equality holds iff the two Bernoulli laws coincide, i.e.\ $\pi_T=\pi_0$.

\paragraph{(2) Upper bound.}
For fixed $\pi_0\in(0,1)$, the function
\[
\pi\mapsto \KL(\mathrm{Bern}(\pi)\,\|\,\mathrm{Bern}(\pi_0))
\]
is convex in $\pi$ and hence attains its maximum over $\pi\in[0,1]$ at the endpoints.
Evaluating at $\pi=1$ and $\pi=0$ gives
\[
\mathrm{IG}(h)\le \max\Big\{\log\frac{1}{\pi_0},\,\log\frac{1}{1-\pi_0}\Big\}.
\]

\paragraph{(3) Monotonicity in $|\log\mathrm{BF}_T(h)|$.}
From \eqref{eq:posterior-odds}, $\logit(\pi_T)=\logit(\pi_0)+\log\mathrm{BF}_T(h)$.
Define $u:=\logit(\pi_T)-\logit(\pi_0)=\log\mathrm{BF}_T(h)$ and write $\pi(u):=\sigma(\logit(\pi_0)+u)$.
Then $\mathrm{IG}(h)=F(u)$ where
\[
F(u)=\KL(\mathrm{Bern}(\pi(u))\,\|\,\mathrm{Bern}(\pi_0)).
\]
A direct differentiation shows $F'(u)=u\,\pi(u)(1-\pi(u))$, hence $F'(u)$ has the same sign as $u$.
Therefore $F(u)$ is increasing in $u$ for $u>0$ and decreasing for $u<0$, which implies $F(u)$ is a monotone function of $|u|=|\log\mathrm{BF}_T(h)|$.
\end{proof}

\subsection{Verifying regularity assumptions for the Gaussian latent-type model}
\label{app:gaussian-assumptions}

We record a convenient sufficient condition ensuring that the Gaussian latent-type model in Example~\ref{ex:gaussian-model} satisfies the regularity assumptions used in Section~\ref{sec:theory}.

\begin{proposition}[Sufficient parameter constraints for regularity]
\label{prop:gaussian-regularity}
Consider Example~\ref{ex:gaussian-model} with the softmax gating \eqref{eq:softmax-gate}. 
Assume the parameter space is restricted by the following bounds:
\begin{enumerate}
\item \emph{Compactness:} all scalar parameters lie in closed bounded intervals and $\bomega$ lies in the simplex; in particular $\mu_1,\mu_3\in[0,M_\mu]$, $\lambda_1,\kappa_1\in[0,M_\lambda]$, $\tau_3\in[0,M_\tau]$, and $\nu\in[\nu_{\min},\nu_{\max}]$ for some $\nu_{\min}>2$.
\item \emph{Nondegenerate scales:} there exist $0<\sigma_{\min}<\sigma_{\max}<\infty$ such that $\sigma_k\in[\sigma_{\min},\sigma_{\max}]$ for $k=1,2,3$.
\item \emph{Volume-range control (design-dependent):} for the realized design $v_{1:T}$, the implied scales satisfy
\[
0<s_{\min,T}\le s_k(v_t;\theta_k)\le s_{\max,T}<\infty
\qquad\text{for all }t\le T,\ k\in\{1,2,3\},\ \theta\in\Theta,
\]
where $s_{\min,T},s_{\max,T}$ may depend on $v_{1:T}$ through $\max_{t\le T} v_t$.
\end{enumerate}
Then Assumptions~\ref{ass:compact-continuous}--\ref{ass:lipschitz} and Assumption~\ref{ass:moments} hold (with constants allowed to depend on the realized design through $s_{\min,T},s_{\max,T}$).
\end{proposition}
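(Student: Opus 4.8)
The plan is to view the Gaussian latent-type model of Example~\ref{ex:gaussian-model} as a three-component location--scale mixture --- Gaussian for types~1 and~2, Student-$t_\nu$ for type~3 --- whose locations $m_{k,y}(\cdot;\btheta_k)$, scales $s_k(\cdot;\btheta_k)$, softmax weights $\rho_k(\cdot;\bomega,\bgamma)$, and tail index $\nu$ all range over a product of compact sets, and to deduce each assumption from a few uniform density bounds. Concretely, under items (1)--(3) I will first establish: (a) an upper bound $f_y(\delta x\mid v_t,\theta)\le \overline c$ uniform in $(\delta x,t,\theta,y)$, since each component's peak is controlled by $1/s_{\min,T}$ and, for the $t$-component, by a ratio of Gamma functions continuous on $[\nu_{\min},\nu_{\max}]$; (b) a Gaussian-type lower bound $f_y(\delta x\mid v_t,\theta)\ge \tfrac1K\min_k f_{k,y}(\delta x\mid v_t,\btheta_k)\ge c_0\,e^{-c_1\delta x^2}$, using $\max_k\rho_k(v_t;\cdot)\ge 1/K$ and the fact that the slowest-decaying component is a Gaussian with scale in $[s_{\min,T},s_{\max,T}]$ (the Student-$t$ tail being only polynomial, hence even flatter); (c) consequently a quadratic envelope $|\log f_y(\delta x\mid v_t,\theta)|\le M(\delta x):=a+b\,\delta x^2$, with $a,b$ depending only on $\sigma_{\min},\sigma_{\max},s_{\min,T},s_{\max,T},M_\mu,\nu_{\min}$.

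Given (a)--(c), Assumptions~\ref{ass:compact-continuous} and~\ref{ass:envelope} are essentially immediate: $\Theta$ is a finite product of closed bounded intervals with the simplex and hence compact; $(\delta x,\theta)\mapsto f_y(\delta x\mid v_t,\theta)$ is jointly continuous and strictly positive (Gaussian and Student-$t$ densities are smooth in location, scale and $\nu$; softmax weights are smooth in $(\bomega,\bgamma)$), so $\theta\mapsto\log f_y(\delta x\mid v_t,\theta)$ is continuous for every $\delta x$; and $\sup_{t\le T}\Ebb_{P_{y^\star,\theta^\star}^{(T)}}[M(\Delta X_t)]=a+b\sup_{t\le T}\Ebb[\Delta X_t^2]<\infty$ because the true increment is a finite location--scale mixture with bounded locations, scales in $[s_{\min,T},s_{\max,T}]$, and heaviest tail a Student-$t$ with $\nu\ge\nu_{\min}>2$. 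The same reasoning yields Assumption~\ref{ass:moments} for any fixed $q\in(2,\nu_{\min})$. For Assumption~\ref{ass:lipschitz} fix $R>0$: by (b), $f_y(\delta x\mid v,\theta)\ge c_0 e^{-c_1R^2}=:c_R>0$ on $|\delta x|\le R$ uniformly in $(v,\theta)$, while $|\partial_{\delta x} f_y|\le D_R$ there (each $|\partial_{\delta x}f_{k,y}|$ is a density times a factor linear in $(\delta x-m_{k,y})/s_k^2$, uniformly bounded on the compact ranges), so $|\partial_{\delta x}\log f_y|\le D_R/c_R=:L_x(R)$; for the volume direction, $v\mapsto m_{1,y}(v;\btheta_1)$ and $v\mapsto s_k(v;\btheta_k)$ are globally Lipschitz with constants bounded on the compact parameter set (e.g.\ $|\partial_v m_{1,y}|=\mu_1\lambda_1 e^{-\lambda_1 v}\le M_\mu M_\lambda$), and the gating logits $a_k(v;\bgamma)$ depend smoothly on $v$, giving $L_v(R)<\infty$. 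One caveat deserves mention: the threshold manipulator location $m_{3,y}(v;\btheta_3)=-\mu_3(2y-1)\1\{v>\tau_3\}$ is only piecewise constant in $v$, so the global $v$-Lipschitz bound fails across $v=\tau_3$; to obtain Assumption~\ref{ass:lipschitz} verbatim one either replaces $\1\{v>\tau_3\}$ by the smooth logistic activation $\sigmoid(\beta(v-\tau_3))$ mentioned after Example~\ref{ex:gaussian-model} (so that $L_v(R)$ scales with $\beta$) or restricts the design and its perturbations to a region bounded away from $\tau_3$.

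The substantive step is the uniform strong law, Assumption~\ref{ass:uniform-lln}, which I would prove by the classical ``pointwise SLLN plus stochastic equicontinuity'' route. Pointwise in $\theta$: $\{\log f_y(\Delta X_t\mid v_t,\theta)-\Ebb[\,\cdot\,]\}_t$ is an independent array with $\sup_t\Ebb[|\log f_y(\Delta X_t\mid v_t,\theta)|^{q/2}]\le\sup_t\Ebb[M(\Delta X_t)^{q/2}]<\infty$ and $p:=q/2>1$, so a Marcinkiewicz--Zygmund-type SLLN for independent sequences (truncation at level $t$ together with Kolmogorov's convergence criterion, using $\sum_t t^{-p}<\infty$) gives $\tfrac1T\sum_{t=1}^T(\log f_y(\Delta X_t\mid v_t,\theta)-\Ebb[\,\cdot\,])\to 0$ a.s.; note it is precisely the heavy Student-$t$ tail that forces the argument below the usual finite-variance criterion and makes the hypothesis $q>2$ the natural one. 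Stochastic equicontinuity: differentiating through the softmax and the component densities yields $\|\nabla_\theta\log f_y(\delta x\mid v_t,\theta)\|\le J(\delta x):=a'+b'|\delta x|+c'\delta x^2$ uniformly in $\theta$ (and in $t$, given the volume-range control), with $\sup_t\Ebb[J(\Delta X_t)]<\infty$; hence $\theta\mapsto\tfrac1T\ell_{y,T}(\theta)$ is Lipschitz with empirical constant $\tfrac1T\sum_t J(\Delta X_t)$, which converges a.s.\ to a finite limit by the same SLLN. Covering compact $\Theta$ by a finite $\epsilon$-net, unioning the pointwise SLLN over the net, bounding the within-ball oscillation by $\epsilon\cdot\tfrac1T\sum_t J(\Delta X_t)$, and refining the net gives uniform a.s.\ convergence; the same $\epsilon$-net discretization combined with Hoeffding's inequality on the truncation event supplies the exponential deviation bound invoked in Proposition~\ref{prop:finite-sample}. (For the asymptotic form of Assumption~\ref{ass:uniform-lln} one additionally needs the design to keep $s_{\min,T},s_{\max,T}$ bounded away from $0$ and $\infty$, e.g.\ $\sup_t v_t<\infty$, so that $M$ and $J$ can be chosen independent of $t$; for a fixed horizon the proposition's design-dependent constants suffice.)

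Finally, Assumption~\ref{ass:prior-support} is a statement about the prior rather than the parameterization: $\theta\mapsto\tfrac1T\KL(P_{y^\star,\theta^\star}^{(T)}\|P_{y,\theta}^{(T)})=\tfrac1T\sum_{t=1}^T\KL(f_{y^\star}(\cdot\mid v_t,\theta^\star)\|f_y(\cdot\mid v_t,\theta))$ is a finite sum of one-step KLs, each bounded by $2\Ebb_{f_{y^\star}(\cdot\mid v_t,\theta^\star)}[M(\Delta X_t)]<\infty$ and continuous on compact $\Theta$, so the infimum $K_T(y^\star,\theta^\star\to y)$ is attained and, by continuity, the $\epsilon$-projection set $\Theta_{y,T}(\epsilon)$ contains an open neighborhood of every minimizer; under the weakly informative, full-support prior of Section~\ref{sec:model} this neighborhood carries positive $\Pi$-mass, which is exactly the KL-support condition. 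I expect the two delicate points to be (i) the Marcinkiewicz--Zygmund SLLN in the heavy-tailed regime ($q$ barely above $2$), where the standard variance-based uniform LLN is unavailable, and (ii) the failure of global $v$-Lipschitzness for the threshold manipulator, which forces either the smoothed-gate variant or a design restriction away from $\tau_3$.
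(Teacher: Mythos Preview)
Your argument is correct and, for the assumptions the paper's proof sketch actually touches (compactness/continuity, the quadratic envelope $|\log f_y|\le c_0+c_1 x^2$, moments via $q<\nu_{\min}$, and local $x$-Lipschitzness from bounded score derivatives under nondegenerate scales), you follow essentially the same route.

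Where you diverge is in coverage and care. The paper's sketch stops at those four items, whereas the proposition as stated claims the full range Assumptions~\ref{ass:compact-continuous}--\ref{ass:lipschitz}, which also includes the KL-support condition (Assumption~\ref{ass:prior-support}) and the uniform LLN (Assumption~\ref{ass:uniform-lln}). You supply both: a continuity-plus-full-support argument for the former, and a Marcinkiewicz--Zygmund SLLN combined with a stochastic-equicontinuity / $\epsilon$-net argument for the latter. The MZ route is the right call here, since the quadratic envelope means $\log f_y$ need not have finite variance when $\nu_{\min}\in(2,4]$, so the usual Kolmogorov variance criterion is unavailable and one must work with a moment exponent $p=q/2\in(1,2)$. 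You also flag a point the paper glosses over: the hard threshold $\1\{v>\tau_3\}$ in the manipulator drift breaks global $v$-Lipschitzness at $v=\tau_3$, so Assumption~\ref{ass:lipschitz} in the $v$-direction only holds under the smoothed-activation variant or a design restriction away from $\tau_3$. That caveat is accurate and worth stating.

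One small cleanup: your lower bound $f_y\ge \tfrac1K\min_k f_{k,y}$ is weaker than needed --- since $f_y$ is a convex combination of the $f_{k,y}$, you get $f_y\ge \min_k f_{k,y}$ directly without the $1/K$ factor, and the argument ``$\max_k\rho_k\ge 1/K$'' is unnecessary. This does not affect the envelope or Lipschitz conclusions.
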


\begin{proof}[Proof sketch]
Compactness and continuity follow because the Gaussian and Student-$t$ densities are continuous in their parameters and strictly positive on $\mathbb{R}$, and the softmax gating is continuous in $(v,\theta)$ on bounded parameter sets.

The envelope condition follows from standard bounds on Gaussian and Student-$t$ log-densities: on any fixed design and under nondegenerate scales, there exist constants $c_0,c_1<\infty$ such that
\[
\sup_{y,\theta,t}\big|\log f_y(x\mid v_t,\theta)\big|
\le c_0 + c_1 x^2,
\]
and $x\mapsto c_0+c_1 x^2$ is integrable under the true law because Gaussian and Student-$t$ components have finite second moments when $\nu_{\min}>2$.

Assumption~\ref{ass:moments} holds for any $q<\nu_{\min}$, since the $q$-th moment of a Student-$t_\nu$ distribution is finite iff $\nu>q$, and Gaussian moments are finite of all orders.

Local Lipschitz in $x$ on $[-R,R]$ follows by bounding derivatives of $\log f_y(x\mid v,\theta)$: for each location--scale component, $\partial_x\log f$ is affine in $x$ divided by $s^2$, hence bounded on $[-R,R]$ when $s$ is bounded away from zero; mixture derivatives can be bounded by convexity using the component-wise bounds.
\end{proof}

\subsection{Extension to dependent increments}
\label{app:dependent}

We record a standard extension showing how concentration statements adapt when increments are not conditionally independent but remain ``well-behaved'' in a martingale sense.

\begin{proposition}[Martingale-difference extension (sketch)]
\label{prop:martingale-extension}
Suppose $(\Delta X_t)_{t\le T}$ is adapted to a filtration $(\mathcal{F}_t)$, and conditional on $(Y,\theta)$ satisfies a martingale-difference structure under the true law:
\[
\mathbb{E}[\Delta X_t\mid \mathcal{F}_{t-1},Y=y^\star,\theta=\theta^\star]=m_{y^\star}(v_t;\theta^\star),
\]
with increments uniformly bounded on a truncation event $E_R$ as in Assumption~\ref{ass:lipschitz} and Proposition~\ref{prop:finite-sample}. 
Then exponential error bounds analogous to Proposition~\ref{prop:finite-sample} follow by replacing Hoeffding inequalities with Azuma--Hoeffding inequalities for martingales.
\end{proposition}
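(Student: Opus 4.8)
The plan is to reproduce the proof of Proposition~\ref{prop:finite-sample} line by line, replacing the deterministic ``population'' log-likelihoods by predictable compensators and replacing Hoeffding's inequality by the Azuma--Hoeffding inequality. For fixed $\theta\in\Theta$ introduce the per-step log-likelihood-ratio increments $\xi_t(\theta):=\log\big(f_{y^\star}(\Delta X_t\mid v_t,\theta^\star)/f_{1-y^\star}(\Delta X_t\mid v_t,\theta)\big)$, their $\cF_{t-1}$-conditional means $\bar\xi_t(\theta):=\Ebb[\xi_t(\theta)\mid\cF_{t-1}]$, and note that $M_t(\theta):=\sum_{s\le t}(\xi_s(\theta)-\bar\xi_s(\theta))$ is a martingale with respect to $(\cF_t)$. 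The normalized compensator $\frac1T\sum_{t=1}^T\bar\xi_t(\theta)$ now plays the role that $\frac1T\KL(P^{(T)}_{y^\star,\theta^\star}\,\|\,P^{(T)}_{1-y^\star,\theta})$ played in the i.i.d.\ case; when the $\cF_{t-1}$-conditional law of $\Delta X_t$ (under the true model) coincides with $f_{y^\star}(\cdot\mid v_t,\theta^\star)$, each $\bar\xi_t(\theta)$ is a genuine conditional KL and hence nonnegative. Accordingly I would define the (now random) separation gap $\delta_T:=\inf_{\theta\in\Theta}\frac1T\sum_{t=1}^T\bar\xi_t(\theta)$ and carry out the argument on the event $\{\delta_T\ge\delta_T^-\}$ for a deterministic threshold $\delta_T^->0$, i.e.\ under a ``persistent informativeness'' condition playing the role of \eqref{eq:kl-gap-positive}.

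The concentration step is where Azuma enters. On the truncation event $E_R$ (for which Assumption~\ref{ass:moments} still gives $\Pbb(E_R^c)\le T\,C_q\,R^{-q}$) the truncated increments $\xi_t^R(\theta):=\xi_t(\theta)\1\{|\Delta X_t|\le R\}$ are bounded by $B_R$, so the centered increments $\xi_t^R(\theta)-\Ebb[\xi_t^R(\theta)\mid\cF_{t-1}]$ are bounded by $2B_R$ deterministically; the discrepancy between $\bar\xi_t$ and its truncated version is absorbed by Cauchy--Schwarz and the moment bound. Applying the Azuma--Hoeffding inequality to the stopped martingale $M_{t\wedge\tau_R}(\theta)$, with $\tau_R$ the first exit time of $|\Delta X_t|$ from $[-R,R]$, gives for each fixed $\theta$ a bound of the form $\Pbb(|M_T(\theta)|\ge T\eta,\ E_R)\le 2\exp(-c\,T\eta^2/B_R^2)$, the i.i.d.\ sub-Gaussian constant $B_R$ being replaced by the martingale-increment bound up to a universal factor. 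To upgrade to a uniform-in-$\theta$ statement---the martingale analogue of Assumption~\ref{ass:uniform-lln}, now with an explicit exponential rate---I would cover the compact $\Theta$ by a finite $\eta$-net, union-bound the martingale tail over net points, and extend to all of $\Theta$ by the local Lipschitz-in-$\theta$ control already used in the proof of Proposition~\ref{prop:finite-sample}; the net cardinality contributes only a subexponential prefactor, which I absorb into the rate exactly as elsewhere in the paper.

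Given the uniform concentration $\sup_{\theta}\big|\frac1T\ell_{y,T}(\theta)-\frac1T\sum_t\Ebb[\log f_y(\Delta X_t\mid v_t,\theta)\mid\cF_{t-1}]\big|\le\eta$ on a high-probability event, the log-sum-exp Laplace sandwich from the proof of Theorem~\ref{thm:bayes-factor-rate} carries over verbatim with $\Lambda_{y,T}$ replaced by the compensator, the difference of the two suprema being $\delta_T\ge\delta_T^-$; hence on the good event $|\log\mathrm{BF}_T(H_T)|\ge T(\delta_T^--\epsilon)$ with the correct sign, and \eqref{eq:posterior-odds} converts this into $\Pbb(Y\ne y^\star\mid H_T)\le e^{-T(\delta_T^--\epsilon)}$ up to a vanishing prefactor, just as in Proposition~\ref{prop:finite-sample}. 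Adding $\Pbb(E_R^c)$ and the net/Azuma tail yields a bound of the same shape as \eqref{eq:finite-sample-bound}. I expect two points to be the genuine obstacles rather than routine bookkeeping. First, the martingale-difference hypothesis as literally stated constrains only the conditional \emph{mean} $\Ebb[\Delta X_t\mid\cF_{t-1},Y=y^\star,\theta=\theta^\star]$, whereas the KL-gap machinery needs control of the full conditional log-likelihood ratio; bridging this requires either strengthening the assumption so that the $\cF_{t-1}$-conditional law equals $f_{y^\star}(\cdot\mid v_t,\theta^\star)$, or replacing KL separation by a weaker mean-based separation functional with a correspondingly weaker rate. Second, the compensated gap $\delta_T$ is genuinely random, so a clean deterministic conclusion needs the persistent-informativeness lower bound $\delta_T\ge\delta_T^-$; without it the statement is only valid conditionally on that event, and the prior-mass set $A_{y,T}(\eta)$ used for the Laplace lower bound must then be defined path-by-path, which is the remaining wrinkle to check.
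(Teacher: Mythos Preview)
Your approach matches the paper's (very brief) proof sketch: write the log Bayes factor as a sum of one-step predictive log-likelihood ratios, use boundedness on $E_R$ to invoke Azuma--Hoeffding in place of Hoeffding, and control the predictable compensator via the tower property in lieu of the deterministic KL gap. Your two closing caveats---that the hypothesis as literally stated constrains only the conditional mean rather than the full conditional law, and that the compensator-based separation gap is random and so requires a persistent-informativeness lower bound---are well taken and in fact sharper than what the paper's sketch acknowledges.
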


\begin{proof}[Proof sketch]
The key object is again the log Bayes factor written as a sum of one-step predictive log-likelihood ratios; these form a martingale with bounded increments on $E_R$. Azuma--Hoeffding yields exponential deviation bounds around the (conditional) expectation, and the deterministic KL-gap lower bounds enter as in the independent case once one controls the conditional expectations via a tower property.
\end{proof}

\end{document}